\documentclass[11pt]{article}

\usepackage[T1]{fontenc}
\usepackage[utf8]{inputenc}
\usepackage{amsmath,amssymb,amsthm,amsfonts}
\usepackage{mathtools}
\usepackage[margin=1in]{geometry}
\usepackage[final]{microtype}
\usepackage{graphicx}
\usepackage{subcaption}
\usepackage{float}
\usepackage{booktabs}
\usepackage{xcolor}
\usepackage{enumitem}
\usepackage{bm}
\usepackage[colorlinks=true,citecolor=blue,linkcolor=blue,urlcolor=blue]{hyperref}

\theoremstyle{plain}
\newtheorem{theorem}{Theorem}[section]
\newtheorem{lemma}[theorem]{Lemma}
\newtheorem{proposition}[theorem]{Proposition}
\newtheorem{corollary}[theorem]{Corollary}

\theoremstyle{definition}
\newtheorem{definition}[theorem]{Definition}
\newtheorem{assumption}{Assumption}

\theoremstyle{remark}
\newtheorem{remark}[theorem]{Remark}

\usepackage{cleveref}
\crefname{assumption}{Assumption}{Assumptions}
\Crefname{assumption}{Assumption}{Assumptions}
\crefname{theorem}{Theorem}{Theorems}
\Crefname{theorem}{Theorem}{Theorems}
\crefname{lemma}{Lemma}{Lemmas}
\Crefname{lemma}{Lemma}{Lemmas}
\crefname{proposition}{Proposition}{Propositions}
\Crefname{proposition}{Proposition}{Propositions}
\crefname{corollary}{Corollary}{Corollaries}
\Crefname{corollary}{Corollary}{Corollaries}
\crefname{definition}{Definition}{Definitions}
\Crefname{definition}{Definition}{Definitions}
\crefname{remark}{Remark}{Remarks}
\Crefname{remark}{Remark}{Remarks}
\crefname{example}{Example}{Examples}
\Crefname{example}{Example}{Examples}
\allowdisplaybreaks
\newcommand{\R}{\mathbb{R}}
\newcommand{\C}{\mathbb{C}}
\newcommand{\N}{\mathbb{N}}
\newcommand{\E}{\mathbb{E}}
\newcommand{\Prob}{\mathbb{P}}
\newcommand{\X}{\mathcal{X}}

\newcommand{\Bop}{\mathcal{B}}
\newcommand{\Gop}{\mathcal{G}}
\newcommand{\Vop}{\mathcal{V}}
\newcommand{\Lop}{\mathcal{L}}
\newcommand{\Res}{\mathcal{R}}
\newcommand{\dom}{\mathrm{D}}
\newcommand{\re}{\operatorname{Re}}

\newcommand{\spec}{\sigma}
\newcommand{\abscissa}{\alpha}
\newcommand{\lognorm}{\mu_2}
\newcommand{\norm}[1]{\left\lVert #1 \right\rVert}
\newcommand{\abs}[1]{\left\lvert #1 \right\rvert}
\newcommand{\inner}[2]{\left\langle #1, #2 \right\rangle}
\newcommand{\dd}{\mathrm{d}}
\newcommand{\Laplace}{\widehat}
\newcommand{\branch}{\rho}
\newcommand{\HR}{\mathcal{U}}
\newcommand{\Safe}{\mathcal{S}}
\newcommand{\id}{\mathrm{I}}

\DeclareMathOperator{\diag}{diag}

\title{\bfseries Regime-Switching Volterra Operators:\\
Modal Stability and Quenched Amplification}

\author{Mauricio Herrera-Mar\'in\thanks{Facultad de Ingenier\'ia,
Universidad del Desarrollo, Santiago, Chile (\texttt{mherrera@udd.cl}).}}

\date{\today}

\begin{document}
\maketitle

\begin{abstract}
We develop an operator-theoretic framework for finite-dimensional, regime-dependent Volterra
equations with completely monotone memory kernels, dissipative network coupling, and Hawkes-type
self-excitation. For each fixed regime we construct the associated Volterra resolvent family and
prove global well-posedness, continuity across regime switches, and explicit a priori bounds. The
main stability result is sharp in the commuting case: after simultaneous diagonalization of the
network Laplacian and the excitation operator, each mode obeys a scalar characteristic equation,
and global asymptotic stability holds exactly when every modal branching ratio lies below the
intensity damping threshold. We also give a norm-based sufficient condition for noncommuting
operators and a Perron--Frobenius spectral criterion for nonnegative intensity blocks, showing when
norm estimates are conservative. Beyond mean stability, we prove a pathwise finite-range power law
for burst amplitudes generated by residence in a Hurwitz but nonnormal regime: under a
cone-alignment event, the survival exponent is the ratio of the regime exit rate to a
cone-corrected finite-time growth rate bounded above by the logarithmic norm of a fixed
Markovian realization in the chosen Euclidean metric. A complementary idealized-feedback result
shows how a logarithmic-norm contraction caps
the amplification band. Finally, we derive the deterministic intensity block as a mean-field limit
of a relaxing long-memory Hawkes system with regimes. Numerical experiments on modal equations, a
small-world network, and a switched nonnormal ODE validate the sharp threshold and the
finite-range amplification mechanism without using the closed-form tail formula as input.
\end{abstract}

\noindent\textbf{Keywords.} Volterra equations, resolvent families, completely monotone kernels,
fractional calculus, regime switching, Hawkes processes, spectral stability, non-normal
operators, finite-range power laws, hydrodynamic limit.

\medskip
\noindent\textbf{AMS subject classifications.} 45D05, 45M10, 47D06, 60G55, 35B35, 47N70.

\medskip
\noindent\textbf{Abbreviated title.} Volterra Operators with Regime Switching.

%% ============================================================================
\section{Introduction}
\label{sec:intro}
%% ============================================================================

\subsection{The equation and its motivation}

Let $G=(V,E)$ be a finite connected graph on $n$ nodes with graph Laplacian $L_G$, attach to each
node one state coordinate and one intensity coordinate, and collect them into vectors in
$H=\R^n$. We study the regime-dependent, operator-valued Volterra evolution
\begin{equation}
\label{eq:master}
\dot U(t) = \Bop\,U(t) + \int_0^t \Gop_{Z(t)}(t-s)\,U(s)\,\dd s + F_{Z(t)}(t),
\qquad U(0)=U_0\in\X,
\end{equation}
posed on the product space $\X=H\times H=\R^n\times\R^n$, where the first block carries a
dissipative network operator $\Bop$ built from $L_G$ and a node-local damping, the convolution
operator $\Gop_z$ encodes regime-dependent self-excitation through a completely monotone scalar
kernel $g_z$ and a bounded excitation operator $A_z$, and $Z(t)$ is a piecewise-constant
\emph{regime process} taking values in a finite set, selecting at each time the active pair
$(\Bop,\Gop_z)$. The precise functional-analytic setting is given in
\Cref{sec:framework}.

Equation \eqref{eq:master} is a common abstraction of several models in which memory, network
geometry, self-excitation, and latent regime modulation act simultaneously. The completely
monotone kernel $g_z$ models long-range (in particular, tempered power-law and fractional)
memory; the operator $\Bop$ models diffusion or consensus dynamics on the network; the
convolution term models Hawkes-type self-excitation whose strength is regime-dependent; and the
process $Z(t)$ models slow switches of the operative dynamical regime, in the spirit of
stochastic systems with Markovian switching \cite{mao2006}. Each ingredient is
individually classical. Their interaction is not, and it is the source of the phenomena we
analyze.

\subsection{Theoretical gaps and contributions}

The classical theory of scalar Volterra equations with completely monotone kernels is due to
Pr\"uss and collaborators \cite{pruss1993evolution}, with the fractional case treated through
the resolvent-family calculus of Bazhlekova \cite{bazhlekova2001fractional}, the general Volterra
theory of Gripenberg, Londen, and Staffans \cite{gripenberg1990}, and the Bernstein
function machinery of Schilling, Song, and Vondra\v{c}ek \cite{schilling2012bernstein}. Network
diffusion and Laplacian spectral methods are standard \cite{chung1997spectral}. Self-exciting
point processes and their stability through the branching ratio go back to Hawkes
\cite{hawkes1971,hawkes1974}, with stability theory due to Br\'emaud and Massouli\'e
\cite{bremaud1996} and the general framework of Daley and Vere-Jones \cite{daley2003}, and the scaling-limit theory of nearly unstable Hawkes processes---which
produces rough fractional limits---is due to Jaisson and Rosenbaum
\cite{jaisson2015,jaisson2016}. Heavy tails generated by multiplicative feedback are classical
through the Kesten--Goldie theory \cite{kesten1973,goldie1991}.

To the best of our knowledge, these ingredients have not been integrated into a single
resolvent-based, operator-valued evolution with a regime-dependent spectral-stability landscape,
and the specific nonlinear phenomena that emerge from their interaction have not been isolated
and proved. The present paper makes the following contributions.

\begin{enumerate}[leftmargin=1.4em,itemsep=4pt]
\item \textbf{Well-posedness and resolvent calculus} (\Cref{sec:wellposed}). For each fixed
regime we construct a strongly continuous Volterra resolvent family
$\{\Res_z(t)\}_{t\ge0}\subset\Lop(\X)$ for \eqref{eq:master}, prove existence and uniqueness of
mild and strong solutions, establish sharp finite-horizon and uniform-in-time a priori bounds
controlled by a fractional branching ratio $\branch_z$, and prove continuity of solutions across
regime switches.

\item \textbf{A sharp modal stability criterion} (\Cref{sec:spectral}). Under simultaneous
diagonalization of $A_z$ and $L_G$ (\Cref{ass:commute}), we prove that the homogeneous evolution
is globally asymptotically stable if and only if the modal fractional branching ratio satisfies
$\branch_{z,j}<c_\lambda$ for every Laplacian mode $j$ (\Cref{thm:modal}), a trichotomy with a
critical case governed by the low-frequency kernel asymptotics. We give a sharp norm-based
sufficient condition in the noncommuting case (\Cref{prop:noncommute}) and prove
\emph{topology-independence} under scalar excitation (\Cref{cor:topology}).

\item \textbf{Finite-range power-law amplification} (\Cref{sec:quenched}). We isolate a
memory-induced mechanism by which regime residence produces a power-law burst law over a finite
band of magnitudes, generated by nonnormal transient growth, even when every regime operator
is Hurwitz. Under a cone-alignment condition we prove a power-law lower bound on the
burst-survival function with exponent $\lambda_\HR/\gamma_\HR^{\mathrm c}$, where
$\gamma_\HR^{\mathrm c}\le\lognorm(A_\HR)$ is the cone-corrected finite-time growth rate
(\Cref{thm:quenched}), and a matching truncation theorem showing that contracting the
logarithmic norm caps the finite-range amplification (\Cref{prop:truncation}). The aligned
limit $\gamma_\HR^{\mathrm c}\to\lognorm(A_\HR)$ gives the most amplifying logarithmic-norm scale,
but finite-band trajectories may rotate away from the maximally growing direction and then realize
a smaller effective rate.

\item \textbf{Hydrodynamic limit} (\Cref{sec:hawkes}). We prove that the intensity block of
\eqref{eq:master} arises as the mean-field scaling limit of a multivariate long-memory Hawkes
process with regimes (\Cref{thm:hydro}), so that microscopic branching explosion corresponds
exactly to macroscopic spectral instability through the criterion of \Cref{sec:spectral}.

\item \textbf{Reproducible numerical validation} (\Cref{sec:example}) on a $120$-node
Watts--Strogatz network, confirming the modal trichotomy (fitted growth rate matching the
characteristic root), the sharpness of the modal criterion and the conservativeness of norm-based
bounds under nonnormal excitation, and the finite-range quenched mechanism by directly simulating a switched nonnormal ODE rather
than sampling the closed-form transformation used in the proof.
\end{enumerate}

\subsection{Relation to existing theory}

Three relations deserve emphasis, because they delimit what is and is not new.

First, our resolvent construction extends the scalar completely monotone Volterra theory
\cite{pruss1993evolution,bazhlekova2001fractional} to block-structured, regime-dependent,
operator-valued kernels on a product Hilbert space; the novelty is the modal spectral
\emph{sharpness} \Cref{thm:modal} and its behavior across switches, not the existence theory in
isolation.

Second, the hydrodynamic limit is in the lineage of Jaisson and Rosenbaum
\cite{jaisson2015,jaisson2016}, but those works prove weak convergence in distribution of a
single-regime rescaled process to a limit law and do not provide an operator stability criterion.
Our \Cref{thm:hydro} is a convergence theorem feeding the sharp spectral threshold of
\Cref{thm:modal}, across regimes.

Third, the quenched tail of \Cref{thm:quenched} is not a large-deviation statement. Its exponent
is a logarithmic-norm quantity of the fixed regime realization, rather than the minimizer of an
action functional, and it can be positive even when the operator is Hurwitz, where no
large-deviation instability exists. This is why the formulation is operator-theoretic: after a
realization and inner product are fixed, the governing object is the logarithmic norm of the
lifted generator. We expand on this in \Cref{rem:why_operator}.

\subsection{Notation and outline}

Throughout, $\Lop(\X)$ denotes the bounded linear operators on $\X$ with operator norm
$\norm{\cdot}$; $\re\xi$ is the real part of $\xi\in\C$; $\Laplace f(\xi)=\int_0^\infty
e^{-\xi t}f(t)\,\dd t$ is the Laplace transform; $\abscissa(A)=\max\{\re\lambda:\lambda\in
\spec(A)\}$ is the spectral abscissa; and $\lognorm(A)=\lambda_{\max}\!\big(\tfrac12(A+A^*)\big)$
is the Euclidean logarithmic norm (matrix measure) \cite{soderlind2006logarithmic,coppel1965stability}. \Cref{sec:framework} fixes the functional
setting. \Cref{sec:wellposed} develops well-posedness and the resolvent calculus.
\Cref{sec:spectral} proves the modal stability criterion. \Cref{sec:quenched} proves the quenched
tail theorems. \Cref{sec:hawkes} establishes the hydrodynamic limit. \Cref{sec:example} presents
the numerical validation. \Cref{sec:discussion} concludes.

%% ============================================================================
\section{Functional setting}
\label{sec:framework}
%% ============================================================================

\subsection{State space}

Let $G=(V,E)$ be a finite connected undirected graph with $n=\abs{V}$ nodes and symmetric
weighted adjacency matrix $W\in\R^{n\times n}$, $W_{ij}\ge0$. To each node we attach one scalar
state coordinate and one scalar intensity coordinate, so that the node-state vector and the
intensity vector both lie in $H:=\R^n$ with the Euclidean inner product
$\inner{x}{y}_H=\sum_i x_iy_i$. The full state lives in the product space
\begin{equation}
\X := H\times H = \R^n\times\R^n,\qquad \norm{(x,\lambda)}_\X^2 := \norm{x}_H^2+\norm{\lambda}_H^2 .
\end{equation}
We write $U=(x,\lambda)^\top\in\X$, where $x_i$ is the state at node $i$ and $\lambda_i$ the
auxiliary intensity (activity) at node $i$.

\begin{remark}[Scope of the setting]
\label[remark]{rem:finite_dim}
We work throughout in the finite-dimensional space $\X=\R^n\times\R^n$, so every operator below
is a bounded matrix with full domain, and all spectral statements refer to matrix spectra. We use
resolvent-family language because it is the natural calculus for the nonlocal (Volterra) part of
\eqref{eq:master}, and because every constant we track is dimension-free---depending only on the
operator norms, the kernel mass, and the dissipation $c_\ast$, not on $n$. The extension to an
abstract infinite-dimensional node Hilbert space with a sectorial generator $\Bop$ is therefore
natural but requires the operator-semigroup machinery of \cite{pruss1993evolution}; we state it as
future work and do not claim it here, so as to keep every proof self-contained and finite.
\end{remark}

\subsection{Dissipative network operator}

Let $L=D-W$ be the combinatorial graph Laplacian, $D_{ii}=\sum_j W_{ij}$, which is symmetric
positive semidefinite. Define $\Bop:\X\to\X$ by
\begin{equation}
\label{eq:defB}
\Bop(x,\lambda) := \big(-c_x x-\kappa L x,\ -c_\lambda\lambda\big),
\qquad c_x,c_\lambda>0,\ \kappa\ge0.
\end{equation}

\begin{proposition}[Dissipativity]
\label[proposition]{prop:dissipative}
The operator $\Bop$ is dissipative on $\X$ and generates a contraction $C_0$-semigroup
$\{e^{t\Bop}\}_{t\ge0}$ satisfying $\norm{e^{t\Bop}}\le e^{-c_\ast t}$ with
$c_\ast:=\min\{c_x,c_\lambda\}>0$.
\end{proposition}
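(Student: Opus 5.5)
The plan is to use the finite-dimensional setting of \Cref{rem:finite_dim}. There $\Bop$ is a bounded symmetric matrix with full domain, so it generates the uniformly continuous semigroup $e^{t\Bop}=\sum_k t^k\Bop^k/k!$. Only the dissipativity estimate and the contraction bound need proof, and both follow from an energy (Lumer--Phillips type) argument on $\X$.

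\textbf{Step 1: dissipativity.} For $U=(x,\lambda)\in\X$, compute
\begin{equation*}
\inner{\Bop U}{U}_\X = -c_x\norm{x}_H^2-\kappa\inner{Lx}{x}_H-c_\lambda\norm{\lambda}_H^2 .
\end{equation*}
Since $L=D-W$ is symmetric positive semidefinite, with
\begin{equation*}
\inner{Lx}{x}_H=\tfrac12\sum_{i,j}W_{ij}(x_i-x_j)^2\ge0,
\end{equation*}
and $\kappa\ge0$, this gives $\inner{\Bop U}{U}_\X\le -c_\ast\norm{U}_\X^2$. Thus $\Bop+c_\ast\id$ is dissipative, and in particular $\Bop$ is dissipative. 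Equivalently, $\Bop$ is symmetric with $\lognorm(\Bop)\le -c_\ast$.

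\textbf{Step 2: the semigroup bound.} Set $U(t)=e^{t\Bop}U_0$ and differentiate $\norm{U(t)}_\X^2$:
\begin{equation*}
\tfrac{\dd}{\dd t}\norm{U(t)}_\X^2=2\inner{\Bop U(t)}{U(t)}_\X\le -2c_\ast\norm{U(t)}_\X^2 .
\end{equation*}
Gr\"onwall's inequality then yields $\norm{e^{t\Bop}U_0}_\X\le e^{-c_\ast t}\norm{U_0}_\X$. Taking the supremum over $\norm{U_0}_\X=1$ gives $\norm{e^{t\Bop}}\le e^{-c_\ast t}$, and contractivity follows since $c_\ast>0$.

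\textbf{Alternative via the spectrum.} Because $\Bop=\diag(-c_x\id-\kappa L,\,-c_\lambda\id)$ is symmetric, one can also diagonalize $L$ in an orthonormal eigenbasis. The eigenvalues of $\Bop$ are $-c_x-\kappa\mu_j\le -c_x$ (with $\mu_j\ge0$ the Laplacian eigenvalues) and $-c_\lambda$. Hence $\norm{e^{t\Bop}}=e^{t\abscissa(\Bop)}\le e^{-c_\ast t}$.

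\textbf{Main obstacle.} There is no real obstacle here; the only point requiring care is the positive semidefiniteness of $L$, which rests on the nonnegativity and symmetry of $W$.
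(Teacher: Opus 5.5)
Your proposal is correct and follows the paper's proof: the energy estimate $\inner{\Bop U}{U}_\X\le -c_\ast\norm{U}_\X^2$ from $L\succeq0$, then Gr\"onwall applied to $\norm{e^{t\Bop}U_0}_\X^2$; your identity $\inner{Lx}{x}_H=\tfrac12\sum_{i,j}W_{ij}(x_i-x_j)^2$ justifies the positive semidefiniteness that the paper only asserts. Your spectral alternative is also valid because $\Bop$ is symmetric, so $\norm{e^{t\Bop}}=e^{t\abscissa(\Bop)}$, and since $L$ has eigenvalue $0$ (the constant vector) it shows the bound is attained, $\norm{e^{t\Bop}}=e^{-c_\ast t}$.
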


\begin{proof}
For $U=(x,\lambda)\in\X$,
\[
\inner{\Bop U}{U}_\X
= \inner{-c_x x-\kappa Lx}{x}_H + \inner{-c_\lambda\lambda}{\lambda}_H
= -c_x\norm{x}_H^2 - \kappa\inner{Lx}{x}_H - c_\lambda\norm{\lambda}_H^2 .
\]
Since $L\succeq0$ we have $\inner{Lx}{x}_H\ge0$, hence
\begin{equation}
\label{eq:dissip_bound}
\inner{\Bop U}{U}_\X \le -c_x\norm{x}_H^2 - c_\lambda\norm{\lambda}_H^2
\le -c_\ast\norm{U}_\X^2 .
\end{equation}
Thus $\Bop+c_\ast\id$ is dissipative, so $\Bop$ is dissipative. As $\X$ is finite-dimensional,
$\Bop$ is bounded and generates the uniformly continuous group $e^{t\Bop}$; the bound
\eqref{eq:dissip_bound} gives, for any solution of $\dot U=\Bop U$,
$\tfrac{\dd}{\dd t}\norm{U}_\X^2=2\inner{\Bop U}{U}_\X\le-2c_\ast\norm{U}_\X^2$, whence
$\norm{e^{t\Bop}U_0}_\X\le e^{-c_\ast t}\norm{U_0}_\X$ by Gr\"onwall. Therefore $e^{t\Bop}$ is a
contraction semigroup with the stated exponential bound.
\end{proof}

\subsection{Completely monotone memory kernels}

For each regime $z$ in a finite index set $\mathcal{Z}$, let $g_z:(0,\infty)\to[0,\infty)$
satisfy:
\begin{enumerate}[label=\textup{(K\arabic*)},leftmargin=2.6em]
\item\label{K1} $g_z$ is completely monotone: $(-1)^k g_z^{(k)}(t)\ge0$ for all $k\in\N$, $t>0$;
\item\label{K2} $g_z\in L^1(0,\infty)$, with mass $G_z:=\int_0^\infty g_z(t)\,\dd t<\infty$;
\item\label{K3} $g_z$ admits a Bernstein representation $g_z(t)=\int_0^\infty e^{-rt}\,\nu_z(\dd r)$
for a positive Borel measure $\nu_z$ on $(0,\infty)$.
\end{enumerate}
By Bernstein's theorem \cite{schilling2012bernstein}, \ref{K1} and \ref{K3} are equivalent; we
state both for convenience. Canonical examples are the tempered fractional kernel
$g(t)=t^{\alpha-1}e^{-\theta t}/\Gamma(\alpha)$ ($0<\alpha\le1$, $\theta>0$, mass
$\theta^{-\alpha}$) and the exponential $g(t)=\omega e^{-\omega t}$ (mass $1$). The Laplace
transform $\Laplace g_z(\xi)=\int_0^\infty e^{-\xi t}g_z(t)\,\dd t$ is, by \ref{K3},
\begin{equation}
\label{eq:laplace_kernel}
\Laplace g_z(\xi)=\int_0^\infty \frac{\nu_z(\dd r)}{\xi+r},
\end{equation}
which is completely monotone in $\xi>0$, strictly decreasing on $(0,\infty)$, with
$\Laplace g_z(0^+)=G_z$ and $\Laplace g_z(\xi)\to0$ as $\xi\to\infty$. We record this for later
use.

\begin{lemma}[Monotonicity of the Laplace symbol]
\label[lemma]{lem:laplace_mono}
Under \ref{K1}--\ref{K3}, the map $\xi\mapsto\Laplace g_z(\xi)$ extends analytically to
$\re\xi>0$, is real and strictly decreasing on $(0,\infty)$ from $G_z$ to $0$, and satisfies
$\abs{\Laplace g_z(\xi)}\le\Laplace g_z(\re\xi)\le G_z$ for $\re\xi>0$.
\end{lemma}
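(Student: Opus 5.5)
The argument works directly from the Bernstein representation \ref{K3} and the Stieltjes form \eqref{eq:laplace_kernel}. The first step is to set up the basic bound. For $\re\xi>0$ and $r>0$ we have $\abs{e^{-\xi t}}=e^{-(\re\xi)t}$. By Tonelli applied to $\int_0^\infty\!\int_0^\infty e^{-(\re\xi+r)t}\,\nu_z(\dd r)\,\dd t$, the Laplace integral $\int_0^\infty e^{-\xi t}g_z(t)\,\dd t$ converges absolutely, and it is dominated by $\int_0^\infty e^{-(\re\xi)t}g_z(t)\,\dd t\le\int_0^\infty g_z(t)\,\dd t=G_z$, which is finite by \ref{K2}. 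This gives $\abs{\Laplace g_z(\xi)}\le\Laplace g_z(\re\xi)\le G_z$ at once. The same Fubini argument justifies the representation $\Laplace g_z(\xi)=\int_0^\infty\nu_z(\dd r)/(\xi+r)$ on the half-plane.

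Next comes analyticity on $\re\xi>0$. I would use the standard theorem on holomorphic dependence of parameter integrals. For each $t$, $\xi\mapsto e^{-\xi t}g_z(t)$ is entire. On any compact $K\subset\{\re\xi>0\}$ with $\re\xi\ge\delta>0$, the integrand is dominated by $e^{-\delta t}g_z(t)\in L^1$. Dominated convergence (or Morera combined with Fubini) then yields holomorphy. Alternatively, one can differentiate under the integral, using $t e^{-\delta t}g_z(t)\le C_\delta\, e^{-\delta t/2}g_z(t)$, which is integrable.

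Finally, for the behaviour on the real axis, take $\xi>0$. The value $\Laplace g_z(\xi)$ is real and nonnegative. For $0<\xi_1<\xi_2$, the difference $\Laplace g_z(\xi_1)-\Laplace g_z(\xi_2)=\int_0^\infty(e^{-\xi_1t}-e^{-\xi_2t})g_z(t)\,\dd t$ has a pointwise nonnegative integrand. It is strictly positive provided $g_z$ is not a.e.\ zero, which holds because $\nu_z\neq0$; I would note that this nondegeneracy is tacitly assumed, since otherwise $g_z\equiv0$ and strictness fails. For the limits, monotone convergence as $\xi\downarrow0$ gives $\Laplace g_z(0^+)=G_z$, and dominated convergence as $\xi\to\infty$, with dominator $g_z\in L^1$ and $e^{-\xi t}g_z(t)\to0$ pointwise for $t>0$, gives the limit $0$.

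The only delicate step is the integrability bookkeeping, because $g_z$ may be singular at $0$, as for the fractional kernel $t^{\alpha-1}$. This is handled entirely by the $L^1$ assumption \ref{K2}, which serves as the dominating function throughout, together with the strictness caveat $g_z\not\equiv0$.
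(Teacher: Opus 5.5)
Your proof is correct, but it takes a different route from the paper: you work in the time domain, while the paper's proof stays entirely in the Stieltjes form \eqref{eq:laplace_kernel}.

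The paper first uses Tonelli to get $\int_{(0,\infty)} r^{-1}\,\nu_z(\dd r)=G_z<\infty$. It then proceeds as follows:
\begin{itemize}
\item \emph{Analyticity:} $r^{-1}$ serves as the dominating function.
\item \emph{Strict decrease:} this comes from the explicit derivative $-\int(\xi+r)^{-2}\,\nu_z(\dd r)<0$.
\item \emph{Modulus bound:} this comes from $\abs{\xi+r}\ge\re\xi+r$.
\end{itemize}

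You instead dominate $e^{-\xi t}g_z(t)$ by $g_z$ (or by $e^{-\delta t}g_z$) and get the modulus bound from $\abs{e^{-\xi t}}=e^{-(\re\xi)t}$. You prove strict monotonicity from the pointwise positivity of $e^{-\xi_1 t}-e^{-\xi_2 t}$ for $t>0$, with no differentiation.

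What your route buys is generality. It uses only $g_z\ge0$, \ref{K2}, and $g_z\not\equiv0$, so it shows that \ref{K1} and \ref{K3} are superfluous for this lemma. Your Tonelli remark justifying \eqref{eq:laplace_kernel} is therefore not needed for any of the stated conclusions.

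What the paper's route buys is that it works directly with the Stieltjes symbol $\int\nu_z(\dd r)/(\xi+r)$, which is the form in which the kernel enters the later characteristic-function arguments. It also makes the sign of the derivative explicit.

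You are right that strictness requires $\nu_z\neq0$, equivalently $g_z\not\equiv0$. The paper's proof uses this tacitly when it asserts that the derivative is strictly negative.
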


\begin{proof}
Analyticity on $\re\xi>0$ follows from \eqref{eq:laplace_kernel} and dominated convergence. Indeed,
$g_z\in L^1$ implies
\[
    G_z=\int_0^\infty g_z(t)\,\dd t
       =\int_{(0,\infty)} r^{-1}\,\nu_z(\dd r)<\infty,
\]
by Tonelli's theorem, and $(\re\xi+r)^{-1}\le r^{-1}$ for $r>0$. Hence the integral defining
$\Laplace g_z$ is finite and locally dominated in the half-plane. For real $\xi>0$, differentiating under the integral gives
$\frac{\dd}{\dd\xi}\Laplace g_z(\xi)=-\int_0^\infty(\xi+r)^{-2}\nu_z(\dd r)<0$, so it is strictly
decreasing; the boundary values follow from monotone convergence. For complex $\xi$ with
$\re\xi>0$, $\abs{\xi+r}\ge\re\xi+r$ gives
$\abs{\Laplace g_z(\xi)}\le\int_0^\infty(\re\xi+r)^{-1}\nu_z(\dd r)=\Laplace g_z(\re\xi)\le G_z$.
\end{proof}

\subsection{Regime-dependent excitation}

Each regime $z$ carries a bounded excitation operator $A_z\in\Lop(H)$, identified with a matrix
in $\R^{n\times n}$. Define the operator-valued kernel
\begin{equation}
\label{eq:defG}
\Gop_z(t)(x,\lambda) := g_z(t)\,\big(0,\ A_z x+A_z\lambda\big) = g_z(t)\,\mathcal{A}_z(x,\lambda),
\qquad
\mathcal{A}_z := \begin{pmatrix} 0 & 0\\ A_z & A_z\end{pmatrix},
\end{equation}
so the memory acts through the intensity block with feedback from both past state and past
intensity. The associated Volterra convolution operator is
$(\Vop_z U)(t)=\int_0^t g_z(t-s)\mathcal{A}_z U(s)\,\dd s$.

The stability threshold proved below is governed by the intensity block. In the modal variables,
$x_j$ is an exponentially decaying input to $\lambda_j$; hence it can affect constants and, in the
exponential-moment case, the observed decay rate, but it does not move the Volterra characteristic
zeros that determine whether the intensity feedback is subcritical, critical, or supercritical.
For this reason several sharp spectral results are stated directly for the diagonal intensity
block $\dot\lambda=-c_\lambda\lambda+g_z*A_z\lambda$.

\begin{lemma}[Convolution bound]
\label[lemma]{lem:conv_bound}
For all $t>0$,
$\norm{(\Vop_z U)(t)}_\X\le\norm{\mathcal{A}_z}\int_0^t g_z(s)\norm{U(t-s)}_\X\,\dd s$, and
$\sup_{t\ge0}\norm{\Vop_z(t)}\le\norm{\mathcal{A}_z}G_z$, where
$\norm{\mathcal{A}_z}\le\sqrt2\,\norm{A_z}$.
\end{lemma}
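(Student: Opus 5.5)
The plan has three steps: a pointwise estimate under the integral, an interpretation of the operator-norm statement, and a direct computation of $\norm{\mathcal{A}_z}$.

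\textbf{Pointwise bound.} For fixed $t>0$, move the norm inside the convolution integral. This is valid because $s\mapsto g_z(t-s)\mathcal{A}_zU(s)$ is Bochner (here simply Lebesgue, componentwise) integrable whenever $U$ is locally bounded and $g_z\in L^1$ by \ref{K2}. Using $g_z\ge0$ and the operator-norm inequality $\norm{\mathcal{A}_zV}_\X\le\norm{\mathcal{A}_z}\norm{V}_\X$, we get
\[
\norm{(\Vop_zU)(t)}_\X\le\int_0^t g_z(t-s)\norm{\mathcal{A}_z}\norm{U(s)}_\X\,\dd s .
\]
The substitution $s\mapsto t-s$ then gives the stated form.

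\textbf{Uniform operator bound.} Interpret $\Vop_z(t)$ as the map sending $U$, viewed as an element of $L^\infty(0,t;\X)$ or of bounded continuous functions with the sup norm, to $(\Vop_zU)(t)$. From the first bound,
\[
\norm{(\Vop_zU)(t)}_\X\le\norm{\mathcal{A}_z}\Big(\int_0^t g_z(s)\,\dd s\Big)\sup_{s\le t}\norm{U(s)}_\X .
\]
Since $g_z\ge0$, $\int_0^tg_z\le G_z$, so the supremum over $t\ge0$ is at most $\norm{\mathcal{A}_z}G_z$. I will state this interpretation explicitly, since that is the only sense in which $\norm{\Vop_z(t)}$ is meaningful.

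\textbf{Bound on $\norm{\mathcal{A}_z}$.} For $U=(x,\lambda)$ we have $\mathcal{A}_zU=(0,A_z(x+\lambda))$. Hence
\[
\norm{\mathcal{A}_zU}_\X=\norm{A_z(x+\lambda)}_H\le\norm{A_z}\big(\norm{x}_H+\norm{\lambda}_H\big).
\]
By Cauchy--Schwarz in $\R^2$, $\norm{x}_H+\norm{\lambda}_H\le\sqrt2\,(\norm{x}_H^2+\norm{\lambda}_H^2)^{1/2}=\sqrt2\,\norm{U}_\X$, which gives $\norm{\mathcal{A}_z}\le\sqrt2\,\norm{A_z}$. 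The constant is attained when $x=\lambda$ is aligned with a top singular vector of $A_z$, so it is sharp.

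\textbf{Main obstacle.} None of these steps is technical. The only care needed is the measurability and integrability justification for exchanging norm and integral near $s=t$, where $g_z$ may be singular (the fractional kernel behaves like $t^{\alpha-1}$). Local integrability of $g_z$ together with local boundedness of $U$ handles this by dominated convergence.
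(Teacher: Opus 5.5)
Your proposal is correct and follows essentially the same route as the paper: the integral triangle inequality with $g_z\ge0$ and the operator-norm bound, then the supremum over inputs with $\sup_s\norm{U(s)}_\X\le1$ together with $\int_0^t g_z\le G_z$, and Cauchy--Schwarz for $\norm{\mathcal{A}_z}\le\sqrt2\,\norm{A_z}$. Two of your points are correct additions that the paper leaves implicit: the explicit interpretation of $\norm{\Vop_z(t)}$ as an operator on sup-normed histories, and the sharpness remark (equality at $x=\lambda$ along a top singular vector of $A_z$).
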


\begin{proof}
The integral bound is the triangle inequality together with
$\norm{\mathcal{A}_z U(s)}_\X\le\norm{\mathcal{A}_z}\norm{U(s)}_\X$ and $g_z\ge0$. Taking the
supremum over $\norm{U}\le1$ and using \ref{K2} gives the uniform bound. For the norm of
$\mathcal{A}_z$, given $U=(x,\lambda)$ with $\norm{U}_\X^2=\norm{x}_H^2+\norm\lambda_H^2=1$,
$\norm{\mathcal{A}_z U}_\X=\norm{A_z(x+\lambda)}_H\le\norm{A_z}\norm{x+\lambda}_H
\le\norm{A_z}(\norm x_H+\norm\lambda_H)\le\sqrt2\,\norm{A_z}$, the last step by Cauchy--Schwarz.
\end{proof}

\noindent We define the \emph{fractional branching ratio} of regime $z$ as
\begin{equation}
\label{eq:branch_def}
\branch_z := \norm{A_z}\,G_z .
\end{equation}
This scalar will control both well-posedness (\Cref{sec:wellposed}) and stability
(\Cref{sec:spectral}); the terminology is justified by the Hawkes correspondence of
\Cref{sec:hawkes}, where $\branch_z$ is exactly the mean offspring number of the microscopic
process.

%% ============================================================================
\section{Well-posedness and the resolvent family}
\label{sec:wellposed}
%% ============================================================================

We fix the standing assumptions for the whole paper.

\begin{assumption}[Standing assumptions]
\label{ass:standing}
The graph $G$ is finite, connected, undirected with symmetric $W\ge0$; $\Bop$ is given by
\eqref{eq:defB} with $c_x,c_\lambda>0$, $\kappa\ge0$; for each $z\in\mathcal{Z}$ (finite),
$g_z$ satisfies \ref{K1}--\ref{K3} and $A_z\in\Lop(H)$; and the regime path $t\mapsto Z(t)$ is
piecewise constant, right-continuous, with locally finitely many jumps at times
$0=t_0<t_1<t_2<\cdots$.
\end{assumption}

\begin{remark}[Deterministic paths versus stochastic regimes]
\label[remark]{rem:det_stoch_regimes}
\Cref{ass:standing} treats $Z$ as a given piecewise-constant signal. This is the framework for
well-posedness, resolvent bounds, and the fixed-regime spectral criteria: all estimates are
pathwise and can be concatenated over the observed switch times. The probabilistic statements in
\Cref{sec:quenched} add an extra assumption on the law of the residence time in the unfavorable
regime, typically an exponential residence time with rate $\lambda_\HR$. Thus no stochastic
assumption on $Z$ is used until the quenched-tail theorem; before that point the theory is
conditional on a prescribed switching path.
\end{remark}

\subsection{Mild and strong solutions for a fixed regime}

Fix a regime $z$ and forcing $F_z\in C([0,\infty);\X)$. A function $U\in C([0,T];\X)$ is a
\emph{mild solution} of
\begin{equation}
\label{eq:fixed_regime}
\dot U(t)=\Bop U(t)+(\Vop_z U)(t)+F_z(t),\qquad U(0)=U_0,
\end{equation}
if it satisfies the variation-of-constants identity
\begin{equation}
\label{eq:voc}
U(t)=e^{t\Bop}U_0+\int_0^t e^{(t-s)\Bop}\big[(\Vop_z U)(s)+F_z(s)\big]\,\dd s,\qquad t\in[0,T].
\end{equation}
It is a \emph{strong solution} if in addition $U\in C^1([0,T];\X)$ and \eqref{eq:fixed_regime}
holds pointwise.

\begin{theorem}[Global well-posedness, fixed regime]
\label{thm:wellposed}
Under \Cref{ass:standing}, for every $U_0\in\X$, every $F_z\in C([0,\infty);\X)$, and every
$T>0$, \eqref{eq:fixed_regime} has a unique mild solution $U\in C([0,T];\X)$. If
$U_0\in\dom(\Bop)=\X$ and $F_z\in C^1$, the solution is strong. Moreover the solution map
$U_0\mapsto U$ is Lipschitz from $\X$ to $C([0,T];\X)$.
\end{theorem}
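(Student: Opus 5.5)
We will treat \eqref{eq:voc} as a fixed-point problem on $C([0,T];\X)$. Define
\[
(\Phi U)(t):=e^{t\Bop}U_0+\int_0^t e^{(t-s)\Bop}\big[(\Vop_z U)(s)+F_z(s)\big]\,\dd s .
\]
First we check that $\Phi$ maps $C([0,T];\X)$ into itself. Continuity of $\Vop_z U$ follows from $g_z\in L^1$ (by \ref{K2}) together with continuity of translations in $L^1$ and boundedness of $U$ on $[0,T]$. The semigroup part is handled by \Cref{prop:dissipative}, which gives $\norm{e^{t\Bop}}\le e^{-c_\ast t}\le1$.

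Rather than shrinking $T$ and iterating, we will obtain a global contraction in one stroke using a weighted sup-norm $\norm{U}_\beta:=\sup_{t\le T}e^{-\beta t}\norm{U(t)}_\X$. For $U,V$ we have, by \Cref{lem:conv_bound},
\[
\norm{(\Phi U-\Phi V)(t)}_\X\le \norm{\mathcal{A}_z}\int_0^t\!\!\int_0^s g_z(s-r)\norm{U(r)-V(r)}_\X\,\dd r\,\dd s .
\]
Multiplying by $e^{-\beta t}$ and bounding $\norm{U(r)-V(r)}_\X\le e^{\beta r}\norm{U-V}_\beta$, we get
\[
e^{-\beta t}\int_0^t\!\!\int_0^s g_z(s-r)e^{\beta r}\,\dd r\,\dd s\le \Laplace g_z(\beta)\cdot\frac1\beta\le G_z/\beta,
\]
where we use \Cref{lem:laplace_mono} and substitute $u=s-r$. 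Hence $\mathrm{Lip}(\Phi)\le \norm{\mathcal{A}_z}G_z/\beta<1$ once $\beta>\sqrt2\,\branch_z$. Banach's theorem then gives a unique mild solution on $[0,T]$ for every $T$. Uniqueness is consistent across horizons because restrictions of solutions are solutions. The main delicate point is simply this choice of weight: a naive Gr\"onwall/Picard argument is also fine, but the weighted norm avoids patching intervals.

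For the Lipschitz dependence, the same computation with two initial data gives
\[
\norm{U-V}_\beta\le \norm{U_0-V_0}_\X+\tfrac{\sqrt2\branch_z}{\beta}\norm{U-V}_\beta ,
\]
so $\norm{U-V}_{C([0,T];\X)}\le e^{\beta T}(1-\sqrt2\branch_z/\beta)^{-1}\norm{U_0-V_0}_\X$. Alternatively, one can apply a Volterra Gr\"onwall inequality to $\norm{U(t)-V(t)}$.

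Regularity is easy because $\X$ is finite-dimensional and $\Bop$ is bounded. The mild solution $U$ is continuous, so $s\mapsto(\Vop_zU)(s)+F_z(s)$ is continuous. Differentiating \eqref{eq:voc}, using $\frac{\dd}{\dd t}e^{t\Bop}=\Bop e^{t\Bop}$ and the Leibniz rule, shows that $U\in C^1$ and that \eqref{eq:fixed_regime} holds pointwise. In fact only continuity of $F_z$ is needed; the $C^1$ hypothesis is harmless. Conversely, any strong solution satisfies \eqref{eq:voc} by differentiating $s\mapsto e^{(t-s)\Bop}U(s)$. Strong solutions are therefore mild, and the uniqueness statement covers both notions.
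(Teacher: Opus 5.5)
Your proof is correct. It differs from the paper's only in how the contraction is made global.

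The paper shows that $\Phi$ is a contraction in the plain sup-norm on a short interval of length $\tau_\ast<(\norm{\mathcal{A}_z}G_z)^{-1}$. It then steps forward in blocks of that length, which implicitly freezes the already-constructed history inside the memory term $\Vop_z U$ at each restart. Lipschitz dependence comes from a separate Gr\"onwall argument on the running supremum, giving the constant $e^{\norm{\mathcal{A}_z}G_zT}$.

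Your weighted norm $\sup_{t\le T}e^{-\beta t}\norm{U(t)}_\X$ is equivalent to the sup-norm on $[0,T]$, so completeness and uniqueness in $C([0,T];\X)$ are unaffected. It makes $\Phi$ a contraction on the whole interval in one step, and your intermediate bound $\Laplace g_z(\beta)/\beta$ is slightly sharper than $G_z/\beta$. As a result, the dependence of the memory term on the full past needs no special treatment, and no patching across intervals is required. The same inequality also yields Lipschitz dependence, at the price of a larger constant $e^{\beta T}(1-\sqrt2\,\branch_z/\beta)^{-1}$, which always exceeds the paper's Gr\"onwall constant.

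Your regularity step is the paper's Step~3. You add two correct observations the paper does not state: continuity of $F_z$ already suffices for a strong solution in finite dimension, and strong solutions are mild, so the uniqueness statement covers both notions.
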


\begin{proof}
\emph{Step 1: contraction on a short interval.} On $C([0,\tau];\X)$ with the supremum norm
$\norm{U}_\infty=\sup_{[0,\tau]}\norm{U(t)}_\X$, define
\[
(\Phi U)(t):=e^{t\Bop}U_0+\int_0^t e^{(t-s)\Bop}\big[(\Vop_z U)(s)+F_z(s)\big]\,\dd s .
\]
For $U,V\in C([0,\tau];\X)$, using $\norm{e^{(t-s)\Bop}}\le1$ (\Cref{prop:dissipative}) and
\Cref{lem:conv_bound},
\[
\norm{(\Phi U)(t)-(\Phi V)(t)}_\X
\le\int_0^t \norm{(\Vop_z(U-V))(s)}_\X\,\dd s
\le\int_0^t\!\!\int_0^s\! g_z(s-r)\norm{\mathcal{A}_z}\,\norm{U(r)-V(r)}_\X\,\dd r\,\dd s .
\]
Bounding $\norm{U-V}_\X\le\norm{U-V}_\infty$ and using
$\int_0^t\int_0^s g_z(s-r)\,\dd r\,\dd s\le t\,G_z$ gives
\begin{equation}
\label{eq:contraction}
\norm{\Phi U-\Phi V}_\infty\le \norm{\mathcal{A}_z}\,G_z\,\tau\,\norm{U-V}_\infty .
\end{equation}
Choose $\tau_\ast<(\norm{\mathcal{A}_z}G_z)^{-1}$ (any $\tau_\ast>0$ if $\mathcal{A}_z=0$). Then
$\Phi$ is a contraction on $C([0,\tau_\ast];\X)$, with a unique fixed point by Banach's theorem,
which is the unique mild solution on $[0,\tau_\ast]$.

\emph{Step 2: global continuation.} The contraction constant in \eqref{eq:contraction} depends
only on $\norm{\mathcal{A}_z}$, $G_z$, and the interval length, not on $U_0$ or the initial time.
Hence the local solution extends by the same argument on $[\tau_\ast,2\tau_\ast]$, and inductively
to $[0,T]$ in finitely many steps of equal length $\tau_\ast$. Uniqueness on $[0,T]$ follows from
uniqueness on each subinterval.

\emph{Step 3: strong solution and regularity.} If $U_0\in\X$ and $F_z\in C^1$, the right-hand side
of \eqref{eq:voc} is differentiable in $t$: $e^{t\Bop}U_0$ is $C^1$ since $\Bop$ is bounded; the
forcing convolution $\int_0^t e^{(t-s)\Bop}F_z(s)\,\dd s$ is $C^1$ by Leibniz; and the memory term
$\int_0^t e^{(t-s)\Bop}(\Vop_z U)(s)\,\dd s$ is $C^1$ because $s\mapsto(\Vop_z U)(s)$ is continuous
(indeed $(\Vop_z U)(0)=0$ and $g_z\in L^1$). Differentiating \eqref{eq:voc} recovers
\eqref{eq:fixed_regime} pointwise, so $U$ is strong.

\emph{Step 4: Lipschitz dependence.} For two initial data with the same forcing, the difference
$D=U-\tilde U$ satisfies $D(t)=e^{t\Bop}(U_0-\tilde U_0)+\int_0^t e^{(t-s)\Bop}(\Vop_z D)(s)\,\dd s$,
so $\norm{D(t)}_\X\le\norm{U_0-\tilde U_0}_\X+\norm{\mathcal{A}_z}\int_0^t G_z\norm{D}_\infty\,\dd s$.
A Gr\"onwall argument on $[0,T]$ gives
$\norm{D}_\infty\le e^{\norm{\mathcal{A}_z}G_z T}\norm{U_0-\tilde U_0}_\X$, the Lipschitz bound.
\end{proof}

\subsection{Sharp a priori bound and subcritical stability}

\begin{theorem}[Uniform bound in the subcritical regime]
\label{thm:apriori}
Suppose the branching ratio $\branch_z=\norm{A_z}G_z$ satisfies $\sqrt2\,\branch_z<c_\ast$, where
$c_\ast=\min\{c_x,c_\lambda\}$. Then every mild solution of \eqref{eq:fixed_regime} with
$F_z\in L^\infty(0,\infty;\X)$ obeys
\begin{equation}
\sup_{t\ge0}\norm{U(t)}_\X\le\frac{\norm{U_0}_\X+c_\ast^{-1}\norm{F_z}_{L^\infty}}{1-\sqrt2\,\branch_z/c_\ast}.
\end{equation}
\end{theorem}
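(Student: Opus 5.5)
The plan is to work directly from the variation-of-constants identity \eqref{eq:voc} and bound it with the exponential decay of $e^{t\Bop}$ (\Cref{prop:dissipative}) and the convolution estimate (\Cref{lem:conv_bound}). The aim is a sup-norm bound that closes on itself: an inequality of the form $M_T\le \norm{U_0}_\X+c_\ast^{-1}\norm{F_z}_{L^\infty}+(\sqrt2\branch_z/c_\ast)M_T$, where $M_T:=\sup_{0\le t\le T}\norm{U(t)}_\X$. This $M_T$ is finite for each $T$ because \Cref{thm:wellposed} gives $U\in C([0,T];\X)$. Rearranging under $\sqrt2\branch_z<c_\ast$, and noting the bound does not depend on $T$, then gives the claim after letting $T\to\infty$.

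Concretely, fix $T>0$ and $t\in[0,T]$, and estimate each term of \eqref{eq:voc} separately.

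\textbf{Initial datum.} Since $\norm{e^{t\Bop}}\le e^{-c_\ast t}\le 1$, we have $\norm{e^{t\Bop}U_0}_\X\le\norm{U_0}_\X$.

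\textbf{Forcing.} We have $\norm{\int_0^t e^{(t-s)\Bop}F_z(s)\,\dd s}_\X\le\norm{F_z}_{L^\infty}\int_0^t e^{-c_\ast(t-s)}\dd s\le c_\ast^{-1}\norm{F_z}_{L^\infty}$.

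\textbf{Memory.} By \Cref{lem:conv_bound}, for $s\le T$,
\[
\norm{(\Vop_z U)(s)}_\X\le\norm{\mathcal{A}_z}\int_0^s g_z(r)\norm{U(s-r)}_\X\,\dd r\le\norm{\mathcal{A}_z}G_zM_T\le\sqrt2\,\branch_z M_T .
\]
This uses $g_z\ge0$, the mass bound \ref{K2}, and $\norm{\mathcal{A}_z}\le\sqrt2\norm{A_z}$. Integrating against $e^{-c_\ast(t-s)}$ then yields at most $c_\ast^{-1}\sqrt2\branch_zM_T$.

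Summing the three estimates and taking the supremum over $t\in[0,T]$ gives the self-referential inequality above. Since $1-\sqrt2\branch_z/c_\ast>0$, we can divide by this factor.

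The main point of care is the memory term. One must use the exponential decay $e^{-c_\ast(t-s)}$ instead of the crude contraction bound $\norm{e^{(t-s)\Bop}}\le1$ from Step 1 of \Cref{thm:wellposed}; otherwise a factor $t$ appears and no uniform-in-time bound results. One must also work with the finite truncated supremum $M_T$, not $\sup_{t\ge0}$, so that rearranging the inequality is legitimate. Mild solutions on $[0,\infty)$ are understood as the consistent concatenation of the unique solutions on each $[0,T]$. Uniqueness in \Cref{thm:wellposed} ensures this is well defined, and the right-hand side is independent of $T$.
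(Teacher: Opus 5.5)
Your proposal is correct and follows essentially the same argument as the paper. Both bound each term of \eqref{eq:voc} using $\norm{e^{(t-s)\Bop}}\le e^{-c_\ast(t-s)}$ and \Cref{lem:conv_bound}, close a self-referential inequality for the truncated supremum $M_T$ (finite by \Cref{thm:wellposed}), rearrange using $\sqrt2\,\branch_z<c_\ast$, and let $T\to\infty$.
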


\begin{proof}
Let $\phi(t)=\norm{U(t)}_\X$. From \eqref{eq:voc}, $\norm{e^{(t-s)\Bop}}\le e^{-c_\ast(t-s)}$
(\Cref{prop:dissipative}) and \Cref{lem:conv_bound} with
$\norm{\mathcal{A}_z}\le\sqrt2\norm{A_z}$,
\[
\phi(t)\le e^{-c_\ast t}\norm{U_0}_\X
+\int_0^t e^{-c_\ast(t-s)}\Big[\sqrt2\norm{A_z}\!\int_0^s g_z(s-r)\phi(r)\,\dd r+\norm{F_z}_{L^\infty}\Big]\dd s .
\]
The forcing term is bounded by $c_\ast^{-1}\norm{F_z}_{L^\infty}$ since
$\int_0^t e^{-c_\ast(t-s)}\dd s\le c_\ast^{-1}$. Set $M=\sup_{[0,T]}\phi$ (finite by
\Cref{thm:wellposed}). Bounding $\phi(r)\le M$ in the memory term and using
$\int_0^s g_z(s-r)\,\dd r\le G_z$ followed by $\int_0^t e^{-c_\ast(t-s)}\dd s\le c_\ast^{-1}$,
\[
M\le\norm{U_0}_\X+c_\ast^{-1}\norm{F_z}_{L^\infty}
+\sqrt2\norm{A_z}\,G_z\,c_\ast^{-1}\,M
=\norm{U_0}_\X+c_\ast^{-1}\norm{F_z}_{L^\infty}+\frac{\sqrt2\branch_z}{c_\ast}\,M .
\]
Since $\sqrt2\branch_z<c_\ast$, the coefficient $\sqrt2\branch_z/c_\ast<1$ and rearranging gives
$M\le(1-\sqrt2\branch_z/c_\ast)^{-1}(\norm{U_0}_\X+c_\ast^{-1}\norm{F_z}_{L^\infty})$, uniformly in
$T$; letting $T\to\infty$ yields the claim.
\end{proof}

\begin{remark}
The threshold $\sqrt2\,\branch_z<c_\ast$ has the right scaling: the destabilizing feedback
strength $\sqrt2\branch_z$ must be dominated by the dissipation $c_\ast$. The factor $\sqrt2$ is
the price of the off-diagonal feedback in \eqref{eq:defG}; for the diagonal excitation
$\mathcal{A}_z=\diag(0,A_z)$ it is absent and the threshold is $\branch_z<c_\lambda$, exactly the
modal threshold of \Cref{sec:spectral}. The modal analysis removes the remaining slack entirely,
replacing the norm $\norm{A_z}$ by the sharp modewise gains.
\end{remark}

\subsection{The fractional resolvent family}

\begin{theorem}[Volterra resolvent family]
\label{thm:resolvent}
For each regime $z$ there is a unique strongly continuous family
$\{\Res_z(t)\}_{t\ge0}\subset\Lop(\X)$ with $\Res_z(0)=\id$ such that, for all $U_0\in\X$ and
$F_z\in C([0,\infty);\X)$, the mild solution of \eqref{eq:fixed_regime} is
\begin{equation}
\label{eq:resolvent_voc}
U(t)=\Res_z(t)U_0+\int_0^t\Res_z(t-s)F_z(s)\,\dd s .
\end{equation}
There exist $C\ge1$, $\gamma\in\R$ with $\norm{\Res_z(t)}\le Ce^{\gamma t}$; one may take any
$\gamma>\gamma_z^\ast:=\inf\{\omega\in\R:\ \xi\mapsto(\xi\id-\Bop-\Laplace g_z(\xi)\mathcal{A}_z)^{-1}\
\text{is bounded analytic on}\ \re\xi>\omega\}$.
\end{theorem}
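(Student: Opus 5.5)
Our plan is to construct $\Res_z$ directly as the solution of the operator-valued homogeneous problem, then obtain \eqref{eq:resolvent_voc} by linearity and uniqueness, and finally read off the exponential bound and its rate from the Laplace transform. \emph{Step 1 (construction).} We consider the $\Lop(\X)$-valued equation
\[
\Res_z(t)=e^{t\Bop}+\int_0^t e^{(t-s)\Bop}\int_0^s g_z(s-r)\mathcal{A}_z\Res_z(r)\,\dd r\,\dd s .
\]
The contraction argument of \Cref{thm:wellposed} carries over verbatim with $\X$ replaced by the Banach space $\Lop(\X)$, since the constant in \eqref{eq:contraction} depends only on $\norm{\mathcal{A}_z}G_z\tau$. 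This yields a unique continuous solution on every $[0,T]$, with $\Res_z(0)=\id$. Because $\X$ is finite-dimensional, continuity in norm implies strong continuity. Applying the family to $U_0$ gives $\Res_z(t)U_0$, which is the unique mild solution with $F_z=0$. Uniqueness of the family follows because any other strongly continuous family with the same property coincides with $\Res_z$ on every $U_0$.

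\emph{Step 2 (variation of constants).} We set $W(t)=\int_0^t\Res_z(t-s)F_z(s)\,\dd s$ and check that $W$ is the mild solution with zero initial datum. Inserting the integral equation for $\Res_z(t-s)$ and exchanging integrals by Fubini, we rewrite
\[
\int_0^t\!\!\int_0^{t-s} e^{(t-s-u)\Bop}(g_z*\mathcal{A}_z\Res_z)(u)F_z(s)\,\dd u\,\dd s
\]
as $\int_0^t e^{(t-\sigma)\Bop}(\Vop_z W)(\sigma)\,\dd\sigma$. This uses associativity and commutativity of the scalar convolution, together with the fact that $\mathcal{A}_z$ is constant. Uniqueness in \Cref{thm:wellposed} then gives \eqref{eq:resolvent_voc} for $U_0$ plus forcing, by linearity.

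\emph{Step 3 (growth bound).} A crude bound is immediate from Gr\"onwall, as in Step 4 of \Cref{thm:wellposed}: $\norm{\Res_z(t)}\le e^{\norm{\mathcal{A}_z}G_z t}$. So some $C\ge1$ and $\gamma$ exist, and the Laplace transform $\Laplace\Res_z(\xi)$ converges for $\re\xi$ large. Transforming the defining equation gives
\[
\Laplace\Res_z(\xi)=(\xi\id-\Bop-\Laplace g_z(\xi)\mathcal{A}_z)^{-1}.
\]
For the sharper claim that any $\gamma>\gamma_z^\ast$ works, we would use the structure $g_z$ completely monotone. Writing $\Res_z$ via Bromwich inversion, $\Res_z(t)=\frac1{2\pi i}\int_{\gamma-i\infty}^{\gamma+i\infty}e^{\xi t}\Laplace\Res_z(\xi)\,\dd\xi$, we split $\Laplace\Res_z(\xi)=\xi^{-1}\id+\xi^{-1}(\Bop+\Laplace g_z(\xi)\mathcal{A}_z)\Laplace\Res_z(\xi)$. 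The first term inverts to $\id$. The second term is $O(|\xi|^{-2})$ on vertical lines, since $\abs{\Laplace g_z(\xi)}\le G_z$ by \Cref{lem:laplace_mono} and the resolvent is $O(|\xi|^{-1})$ for large $|\xi|$. Hence the contour integral converges absolutely on the line $\re\xi=\gamma$, and we can shift the contour to any $\gamma>\gamma_z^\ast$. This gives $\norm{\Res_z(t)}\le Ce^{\gamma t}$.

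\emph{Main obstacle.} We expect the main obstacle to be the contour shift, because $\Laplace g_z$ may fail to extend analytically past $\re\xi=0$. For example, the fractional kernel has a branch point at $-\theta$. To handle this, we would interpret the bounded-analytic condition in the definition of $\gamma_z^\ast$ as including $\gamma$ with $\gamma\ge0$ whenever the kernel is not extendable, and restrict the shift to that region. On that region, uniform boundedness of the resolvent on vertical lines together with the $O(|\xi|^{-2})$ decay justifies the shift via Paley--Wiener or the Phragm\'en--Lindel\"of-type decay of the integrand on horizontal segments.
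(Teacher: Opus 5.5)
Your proof is correct, but it reaches the resolvent by a genuinely different route from the paper's. You build $\Res_z$ in the time domain, as the unique fixed point of the operator-valued variation-of-constants equation in $C([0,T];\Lop(\X))$. You then derive \eqref{eq:resolvent_voc} by Fubini and associativity of the scalar convolution, and obtain $\norm{\Res_z(t)}\le e^{\norm{\mathcal{A}_z}G_z t}$ from Gr\"onwall before any transform appears. The paper instead starts from the symbol $H_z(\xi)$: a Neumann series shows it is analytic and $O(\abs{\xi}^{-1})$ on a right half-plane, Pr\"uss's generation theorem produces $\Res_z$ as a Bromwich integral, and the representation follows from injectivity of the Laplace transform.

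Your route is more elementary and self-contained in finite dimensions. Existence, uniqueness and the Duhamel formula need no transform theory, and $\Laplace\Res_z=H_z$ is legitimate because exponential growth is already in hand; the paper's injectivity step uses that fact only tacitly. The paper's route ties $\Res_z$ to $H_z$ from the start, which is what the later spectral results exploit. For the sharp rate both proofs use the same contour shift. Your splitting into $\xi^{-1}\id$ plus an $O(\abs{\xi}^{-2})$ remainder is more explicit than the paper's one-line appeal, but it needs two bookkeeping fixes.

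First, if $\gamma<0$, the remainder $\xi^{-1}(\Bop+\Laplace g_z(\xi)\mathcal{A}_z)H_z(\xi)$ has a pole at $\xi=0$ with residue $(\Bop+G_z\mathcal{A}_z)H_z(0)=-\id$. When you shift across it, this cancels the $\id$ produced by the first term; if you ignore it you only get $1+Ce^{\gamma t}$, not decay. Splitting with $(\xi-\beta)^{-1}$ for some $\beta<\gamma$ avoids the issue.

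Second, on lines with $\re\xi<0$ the bound $\abs{\Laplace g_z(\xi)}\le G_z$ of \Cref{lem:laplace_mono} is unavailable. Under \ref{C1}, use $\abs{\Laplace g_z(\xi)}\le\int_0^\infty e^{-\re\xi\,t}g_z(t)\,\dd t$ instead. This is uniform on the strip and gives the uniform $O(\abs{\xi}^{-1})$ decay of $H_z$ needed for the horizontal segments to vanish.
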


\begin{proof}
\emph{Existence via Laplace inversion.} Taking the formal Laplace transform of
\eqref{eq:fixed_regime} with $F_z=0$ gives
$\xi\,\Laplace U(\xi)-U_0=\Bop\Laplace U(\xi)+\Laplace g_z(\xi)\mathcal{A}_z\Laplace U(\xi)$, i.e.
\begin{equation}
\label{eq:resolvent_symbol}
\Laplace U(\xi)=H_z(\xi)U_0,\qquad
H_z(\xi):=\big(\xi\id-\Bop-\Laplace g_z(\xi)\,\mathcal{A}_z\big)^{-1}.
\end{equation}
We show $H_z(\xi)$ is well defined and analytic for $\re\xi$ large. By \Cref{lem:laplace_mono},
$\abs{\Laplace g_z(\xi)}\le G_z$ on $\re\xi>0$, so $\norm{\Laplace g_z(\xi)\mathcal{A}_z}\le
\sqrt2\norm{A_z}G_z=\sqrt2\branch_z$. Since $\Bop$ is bounded with
$\re\inner{\Bop U}{U}\le-c_\ast\norm U^2$, for $\re\xi>\norm{\Bop}$ the operator
$\xi\id-\Bop$ is boundedly invertible with $\norm{(\xi\id-\Bop)^{-1}}\le(\re\xi+c_\ast)^{-1}$
(numerical-range bound). Hence for $\re\xi$ large enough that
$\sqrt2\branch_z\,(\re\xi+c_\ast)^{-1}<1$, a Neumann series gives
\begin{equation}
\label{eq:neumann}
H_z(\xi)=\sum_{k\ge0}\big[(\xi\id-\Bop)^{-1}\Laplace g_z(\xi)\mathcal{A}_z\big]^k(\xi\id-\Bop)^{-1},
\end{equation}
convergent in $\Lop(\X)$ and analytic in $\xi$ on this half-plane. The map $\xi\mapsto H_z(\xi)$
is a bounded analytic $\Lop(\X)$-valued function there and satisfies the resolvent growth bound
$\norm{H_z(\xi)}\le[(\re\xi+c_\ast)(1-\sqrt2\branch_z(\re\xi+c_\ast)^{-1})]^{-1}=O(\abs\xi^{-1})$.
By the generation theorem for Volterra equations of scalar type with bounded operators
(\cite[Ch.~I]{pruss1993evolution}; in finite dimension this is the inverse Laplace transform of a
rational-in-$\Laplace g_z$ symbol), $H_z$ is the Laplace transform of a unique strongly continuous
family $\Res_z(t)$ with $\Res_z(0)=\id$, namely the Bromwich integral
$\Res_z(t)=\frac1{2\pi i}\int_{\Gamma}e^{\xi t}H_z(\xi)\,\dd\xi$ on a vertical contour $\Gamma$ to
the right of $\gamma_z^\ast$.

\emph{Representation.} For $U_0\in\X$, $t\mapsto\Res_z(t)U_0$ solves \eqref{eq:fixed_regime} with
$F_z=0$ by construction (its Laplace transform is \eqref{eq:resolvent_symbol}, and Laplace
transform is injective on continuous functions of exponential growth). For general $F_z$, the
function defined by \eqref{eq:resolvent_voc} has Laplace transform
$H_z(\xi)U_0+H_z(\xi)\Laplace F_z(\xi)=H_z(\xi)(U_0+\Laplace F_z(\xi))$, which solves the
transformed equation; inverting gives the mild solution, and uniqueness follows from
\Cref{thm:wellposed}. The growth bound $\norm{\Res_z(t)}\le Ce^{\gamma t}$ for $\gamma>\gamma_z^\ast$
is the standard consequence of analyticity and decay of $H_z$ on $\re\xi>\gamma_z^\ast$ via contour
shifting.
\end{proof}

\subsection{Continuity across regime switches}

The global solution of \eqref{eq:master} is built by concatenation: on $[t_k,t_{k+1})$ it solves
the fixed-regime equation for $z=Z(t_k)$ with initial datum $U(t_k)$ and with the \emph{full past}
entering the memory integral. The next lemma shows no jump is introduced.

\begin{lemma}[Continuity at switches]
\label[lemma]{lem:switching}
Under \Cref{ass:standing}, the concatenated mild solution $U$ of \eqref{eq:master} belongs to
$C([0,T];\X)$; in particular $\lim_{t\uparrow t_k}U(t)=\lim_{t\downarrow t_k}U(t)=U(t_k)$ at every
switching time.
\end{lemma}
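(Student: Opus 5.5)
The plan is to prove continuity interval by interval, by induction on the switching index $k$. Fix $T>0$. By \Cref{ass:standing}, only finitely many switching times $t_1<\dots<t_N$ lie in $(0,T)$; set $t_{N+1}:=T$. The concatenated solution is defined on each $[t_k,t_{k+1})$ as the solution of the fixed-regime equation with $z_k:=Z(t_k)$. The memory term there is the full-history integral $\int_0^t g_{z_k}(t-s)\mathcal{A}_{z_k}U(s)\,\dd s$, in which $U$ on $[0,t_k]$ is already known. I would rewrite this as the integral over $[t_k,t]$ plus a \emph{history forcing}
\[
h_k(t):=\int_0^{t_k} g_{z_k}(t-s)\,\mathcal{A}_{z_k}U(s)\,\dd s .
\]
With this splitting, on $[t_k,t_{k+1}]$ the function $U$ is the mild solution of a Volterra equation started at time $t_k$ from the datum $U(t_k)$, with forcing $F_{z_k}+h_k$:
\[
U(t)=e^{(t-t_k)\Bop}U(t_k)+\int_{t_k}^t e^{(t-s)\Bop}\Big[\int_{t_k}^s g_{z_k}(s-r)\mathcal{A}_{z_k}U(r)\,\dd r+h_k(s)+F_{z_k}(s)\Big]\dd s .
\]

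The key step is to check that this shifted problem falls under \Cref{thm:wellposed}, i.e.\ that the forcing is admissible. First, $h_k$ is well defined and bounded on $[t_k,T]$. The inductive hypothesis gives $U\in C([0,t_k];\X)$, so $\norm{h_k(t)}\le\norm{\mathcal{A}_{z_k}}G_{z_k}\sup_{[0,t_k]}\norm U$ by \Cref{lem:conv_bound}. Second, $h_k$ is continuous on $[t_k,T]$. For $t>t_k$ this follows from dominated convergence. Continuity down to $t=t_k$ comes from $L^1$-continuity of translation: $\int_0^{t_k}|g(t-s)-g(t_k-s)|\,\dd s\to0$ as $t\downarrow t_k$ because $g\in L^1$ (by \ref{K2}), and $U$ is bounded on $[0,t_k]$. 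The same $L^1$-translation argument also shows that the whole convolution $t\mapsto\int_0^t g(t-s)\mathcal{A}U(s)\,\dd s$ is continuous, with value $0$ at $t=0$. The contraction argument of \Cref{thm:wellposed} only uses continuity (in fact integrability) of the forcing, so it applies verbatim on $[t_k,t_{k+1}]$ with shifted time origin. It yields a unique $U\in C([t_k,t_{k+1}];\X)$ whose value at $t_k$ is $U(t_k)$, since the integral terms vanish there. This gives right-continuity at $t_k$.

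For left-continuity at $t_{k+1}$, I would observe that the formula on $[t_k,t_{k+1})$ extends continuously to the closed interval, because every term is an integral of a bounded integrand over an interval shrinking to zero or is continuous. The solution on the next piece is started from this limiting value, which is by definition $U(t_{k+1})$. Hence $\lim_{t\uparrow t_{k+1}}U(t)=U(t_{k+1})=\lim_{t\downarrow t_{k+1}}U(t)$, and the induction closes. It covers $[0,T]$ in $N+1$ steps, and $T$ was arbitrary.

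The main obstacle is the continuity of $h_k$ at $t_k$ when $g_{z_k}$ is singular at $0$, as for the fractional kernel $t^{\alpha-1}$. Here a pointwise argument fails, and the $L^1$-continuity of translations is exactly what is needed. A secondary point is uniqueness of the concatenation. It follows from the uniqueness in \Cref{thm:wellposed} on each subinterval, since $h_k$ is determined by the already-unique past.
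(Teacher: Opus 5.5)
Your proof is correct and follows essentially the same route as the paper. Both apply \Cref{thm:wellposed} on each regime interval with the already-determined history treated as a fixed continuous datum, and both get continuity at $t_k$ because the new piece starts from $U(t_k)$ and its integral terms vanish as $t\downarrow t_k$. Your splitting into the history forcing $h_k$, together with $L^1$-continuity of translations (needed when the kernel is singular at $0$), makes rigorous what the paper only asserts via ``$g_z\in L^1$ and the integrand is bounded''; you also work directly with the variation-of-constants formula rather than the resolvent representation \eqref{eq:resolvent_voc}.
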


\begin{proof}
On $[t_{k-1},t_k]$, $U$ is continuous by \Cref{thm:wellposed} applied to regime $Z(t_{k-1})$ with
the memory integral $\int_0^t g(t-s)\mathcal{A}U(s)\,\dd s$ taken over the entire history $[0,t]$;
the history is a fixed continuous datum on this interval, so the variation-of-constants map is
still a contraction and yields a continuous solution up to and including $t_k$. At $t_k$ the regime
switches to $Z(t_k)$; the new evolution starts from $U(t_k)$ and again includes the full past in
its memory term. Since the representation \eqref{eq:resolvent_voc} for the new regime satisfies
$\Res_{Z(t_k)}(0)=\id$ and the forcing/memory integrals vanish as $t\downarrow t_k$, we get
$\lim_{t\downarrow t_k}U(t)=U(t_k)=\lim_{t\uparrow t_k}U(t)$. The memory term is continuous across
$t_k$ because $g_z\in L^1$ and the integrand is bounded, so no compatibility condition is needed.
\end{proof}

\begin{remark}
\Cref{lem:switching} is what makes \eqref{eq:master} a genuine piecewise-deterministic evolution
with memory: switching deforms the operator pair instantaneously but the state and its entire
history vary continuously. This is the analytic underpinning of the regime-dependent spectral
``deformation'' studied next, and it is exactly the structure absent from single-regime
Volterra/Hawkes scaling-limit theory.
\end{remark}

%% ============================================================================
\section{A sharp modal stability criterion}
\label{sec:spectral}
%% ============================================================================

\subsection{Modal decomposition}

\begin{assumption}[Simultaneous diagonalization]
\label{ass:commute}
For the regime $z$ under consideration, $A_z$ commutes with the graph Laplacian $L$:
$A_zL=LA_z$. Equivalently, $A_z$ and $L$ share a common orthonormal eigenbasis
$\{v_j\}_{j=1}^n$, $Lv_j=\ell_j v_j$ with $0=\ell_1\le\cdots\le\ell_n$, $A_zv_j=a_{z,j}v_j$.
\end{assumption}

This holds, in particular, whenever $A_z=\alpha_z\id+\beta_z L+\cdots$ is a polynomial in $L$
(scalar or diffusive excitation), and is the natural setting in which a clean modal criterion can
be expected. The noncommuting case is treated in \Cref{prop:noncommute}.

Under \Cref{ass:commute}, project \eqref{eq:fixed_regime} (with $F_z=0$) onto mode $v_j$. Writing
$x(t)=\sum_j x_j(t)v_j$, $\lambda(t)=\sum_j\lambda_j(t)v_j$, the dynamics decouples into the scalar
Volterra systems
\begin{equation}
\label{eq:modal_system}
\dot x_j=-(c_x+\kappa\ell_j)x_j,\qquad
\dot\lambda_j=-c_\lambda\lambda_j+a_{z,j}\!\int_0^t g_z(t-s)\big(x_j(s)+\lambda_j(s)\big)\dd s .
\end{equation}
Define the \emph{modal fractional branching ratio} and the modal characteristic function
\begin{equation}
\label{eq:modal_branch}
\branch_{z,j}:=\abs{a_{z,j}}\,G_z,
\qquad
D_{z,j}(\xi):=\xi+c_\lambda-a_{z,j}\,\Laplace g_z(\xi).
\end{equation}
The threshold constant is $c_\lambda$, the intensity damping. We write $c_\lambda$ for it
throughout; in the normalization $c_\lambda=1$ one recovers the classical Hawkes threshold
$\branch=1$.

\subsection{The trichotomy}

The sharp ``if and only if'' criterion requires the excitation gains to be sign-definite, the
natural Hawkes setting; for indefinite gains only the sufficient direction survives (see
\Cref{rem:signed}). We also distinguish the decay \emph{rate} according to the kernel class, since
completely monotone kernels need not yield exponential decay without an exponential moment.

\begin{assumption}[Nonnegative gains and kernel class]
\label{ass:gains}
The modal gains satisfy $a_{z,j}\ge0$ for all $j$. Moreover the kernel falls in one of:
\begin{enumerate}[label=\textup{(C\arabic*)},leftmargin=2.6em]
\item\label{C1} \emph{Exponential moment:} $\int_0^\infty e^{\eta t}g_z(t)\,\dd t<\infty$ for some
$\eta>0$ (e.g.\ exponential or tempered kernels), so $\Laplace g_z$ extends analytically to
$\re\xi>-\eta$;
\item\label{C2} \emph{Heavy integrable (no exponential moment):} $g_z\in L^1$ but $\int_0^\infty
e^{\eta t}g_z(t)\,\dd t=\infty$ for every $\eta>0$, with a singular low-frequency expansion
$\Laplace g_z(\xi)=G_z-c\,\xi^{\beta}+o(\xi^{\beta})$ as $\xi\downarrow0$ for some
$\beta\in(0,1)$. A canonical $L^1$ example is the Mittag-Leffler density
$g_z(t)=t^{\beta-1}E_{\beta,\beta}(-t^\beta)$ (completely monotone, mass $G_z=1$), for which
\[
\Laplace g_z(\xi)=\frac{1}{1+\xi^\beta}=1-\xi^\beta+o(\xi^\beta)\quad(\xi\downarrow0),
\]
matching the singular expansion above with $G_z=1$, $c=1$; equivalently, any tempered kernel
modified to retain a fractional singularity at the origin while keeping finite mass. We emphasize
that the pure power $t^{-\alpha}/\Gamma(1-\alpha)$ is \emph{not} admissible here, as it violates
the finite mass requirement \ref{K2}; the singular behavior must come from the low-frequency
expansion of an $L^1$ kernel.
\end{enumerate}
\end{assumption}

\begin{theorem}[Modal stability criterion]
\label{thm:modal}
Under \Cref{ass:standing}, \Cref{ass:commute}, and \Cref{ass:gains}, fix a mode $j$ and consider
the homogeneous modal system \eqref{eq:modal_system}. The $x_j$-component always decays as
$e^{-(c_x+\kappa\ell_j)t}$. For the $\lambda_j$-component, with $\branch_{z,j}=a_{z,j}G_z$:
\begin{enumerate}[label=\textup{(\roman*)},leftmargin=2.2em]
\item \emph{(Subcritical)} If $\branch_{z,j}<c_\lambda$, then $D_{z,j}(\xi)\ne0$ for all
$\re\xi\ge0$, and the mode is stable. Under \ref{C1} the decay is exponential: if the intensity
symbol is zero-free in a strip $\re\xi\ge-\omega_{\lambda,j}$, then for every
$0<\omega_j<\min\{\omega_{\lambda,j},c_x+\kappa\ell_j\}$ there is $M_j\ge1$ such that
$\abs{\lambda_j(t)}\le M_j e^{-\omega_j t}(\abs{x_j(0)}+\abs{\lambda_j(0)})$. Under \ref{C2} the
decay is algebraic, governed by the low-frequency expansion:
$\abs{\lambda_j(t)}=O(t^{-\beta-1})$ as $t\to\infty$.
\item \emph{(Critical)} If $\branch_{z,j}=c_\lambda$, then $D_{z,j}(0)=0$ is the rightmost
singularity. Under \ref{C1} ($\Laplace g_z$ analytic at $0$) the zero is simple with
$D_{z,j}'(0)>0$, and the mode has a neutral zero-frequency component: for generic initial data it
relaxes to a finite nonzero constant rather than to zero. Under \ref{C2} the singular expansion
yields algebraic relaxation $\abs{\lambda_j(t)}\sim C\,t^{-(1-\beta)}$.
\item \emph{(Supercritical)} If $\branch_{z,j}>c_\lambda$, then $D_{z,j}$ has a unique positive
real zero $\xi_j^\ast>0$. This zero is the rightmost singularity of the modal resolvent, and there
exist initial data for which $\abs{\lambda_j(t)}\gtrsim e^{\xi_j^\ast t}$; the mode is unstable.
\end{enumerate}
\end{theorem}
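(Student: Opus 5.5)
The plan is to Laplace-transform the modal system \eqref{eq:modal_system} and read everything off the scalar characteristic function $D_{z,j}$. The $x_j$-equation is an autonomous linear ODE, so $x_j(t)=e^{-(c_x+\kappa\ell_j)t}x_j(0)$ and $\Laplace x_j(\xi)=x_j(0)/(\xi+c_x+\kappa\ell_j)$. Transforming the $\lambda_j$-equation gives
\[
\Laplace\lambda_j(\xi)=\frac{\lambda_j(0)+a_{z,j}\Laplace g_z(\xi)\,\Laplace x_j(\xi)}{D_{z,j}(\xi)} .
\]
The long-time behaviour of $\lambda_j$ is therefore dictated by the zeros of $D_{z,j}$ and by the singularities of $\Laplace g_z$ in $\re\xi\ge0$. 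The pole of $\Laplace x_j$ at $-(c_x+\kappa\ell_j)$ only enters the decay rate under \ref{C1}. Growth a priori is exponential by \Cref{thm:resolvent}, so the Laplace transform is legitimate for large $\re\xi$; inversion is done by the Bromwich integral, with the contour shifted left.

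\emph{Zero location.} For $\re\xi\ge0$, \Cref{lem:laplace_mono} gives $\abs{a_{z,j}\Laplace g_z(\xi)}\le a_{z,j}G_z=\branch_{z,j}$, while $\abs{\xi+c_\lambda}\ge c_\lambda$.

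\begin{itemize}
\item In case (i), $\branch_{z,j}<c_\lambda$, so $\abs{D_{z,j}(\xi)}\ge c_\lambda-\branch_{z,j}>0$ on the closed half-plane (at $\xi=0$ one uses the boundary value $G_z$).
\item In case (ii), the same inequality shows that any zero with $\re\xi\ge0$ must have $\abs{\xi+c_\lambda}=c_\lambda$ and $\abs{\Laplace g_z(\xi)}=G_z$. Strict monotonicity, $\abs{\Laplace g_z(\xi)}\le\Laplace g_z(\re\xi)<G_z$ for $\re\xi>0$, forces $\xi=0$. For $\xi=i\omega\neq0$ one has $\abs{i\omega+c_\lambda}>c_\lambda$, which excludes it.
\item In case (iii), $\xi\mapsto D_{z,j}(\xi)$ on $(0,\infty)$ is strictly increasing, because $\Laplace g_z$ is strictly decreasing and $a_{z,j}>0$. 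It runs from $c_\lambda-\branch_{z,j}<0$ to $+\infty$, so there is a unique real root $\xi_j^\ast>0$. For any complex zero $\xi$ with $\re\xi\ge\xi_j^\ast$, $\xi\neq\xi_j^\ast$, the bound $\abs{\xi+c_\lambda}\ge\re\xi+c_\lambda$ together with $\abs{\Laplace g_z(\xi)}\le\Laplace g_z(\re\xi)$ gives $\abs{D}\ge D(\re\xi)\ge0$. Equality would require $\operatorname{Im}\xi=0$, so $\xi_j^\ast$ is rightmost.
\end{itemize}

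In case (ii) under \ref{C1}, the derivative is $D'(0)=1-a_{z,j}\Laplace g_z'(0)=1+a_{z,j}\int t g_z>0$, so the zero is simple.

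\emph{Inversion.} Under \ref{C1}, $\Laplace g_z$ is analytic on $\re\xi>-\eta$. In case (i), zero-freeness extends by continuity to a strip $\re\xi\ge-\omega_{\lambda,j}$. Shifting the Bromwich contour to $\re\xi=-\omega_j$ is justified by the $O(\abs\xi^{-1})$ decay of $1/D_{z,j}$. I will handle the non-absolutely-integrable $1/\xi$ tail by subtracting $\lambda_j(0)/(\xi+c_\lambda)$, whose remainder is $O(\abs\xi^{-2})$. This yields the exponential bound, linear in $\abs{x_j(0)}+\abs{\lambda_j(0)}$.

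In case (ii) under \ref{C1}, the residue at the simple pole $0$ gives the constant
\[
\frac{\lambda_j(0)+a_{z,j}G_z\,x_j(0)/(c_x+\kappa\ell_j)}{D'(0)},
\]
which is nonzero for generic data. The remainder decays exponentially by the same contour shift. In case (iii), the residue at $\xi_j^\ast$ is nonzero whenever the numerator is nonzero there, for instance when $x_j(0)=0$ and $\lambda_j(0)\neq0$. The contribution of the other singularities is $o(e^{\xi_j^\ast t})$ after shifting slightly left of $\xi_j^\ast$, but only where this is possible; for \ref{C2} I will instead argue via the positive real root and a Paley--Wiener/renewal-type comparison. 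This gives $\abs{\lambda_j(t)}\gtrsim e^{\xi_j^\ast t}$.

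\emph{The main obstacle} is the \ref{C2} algebraic rates, where no contour shift is available because $\Laplace g_z$ has a branch point at $0$. I would use a Tauberian / branch-cut argument, taking the expansion $\Laplace g_z(\xi)=G_z-c\xi^\beta+o(\xi^\beta)$ as given.

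\begin{itemize}
\item In case (ii), $D_{z,j}(\xi)\sim a_{z,j}c\,\xi^\beta$, so $\Laplace\lambda_j(\xi)\sim C'\xi^{-\beta}$. Karamata's Tauberian theorem applies: $\lambda_j\ge0$ for nonnegative data by positivity of the Volterra resolvent with $a_{z,j}\ge0$ and completely monotone $g_z$. It yields $\lambda_j(t)\sim Ct^{\beta-1}$, which is the claimed $t^{-(1-\beta)}$.
\item In case (i), $1/D_{z,j}$ is analytic in $\re\xi>0$, continuous up to the axis, and has expansion $D(0)^{-1}+c''\xi^\beta+\dots$ at the origin. The $t^{-\beta-1}$ rate is the standard transfer of a $\xi^\beta$ singularity, obtained by deforming onto a Hankel contour around the negative real axis, with the jump of $\Laplace g_z$ across the cut controlled by the Stieltjes representation \eqref{eq:laplace_kernel}.
\end{itemize}

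Rigour here needs some extra regularity, for example that $\Laplace g_z$ continues across $(-\infty,0)$ from both sides with the stated expansion. I would state that as implicit in \ref{C2} (it holds for the Mittag-Leffler example) and verify the $O(t^{-\beta-1})$ bound there. This is the step I expect to need the most care.
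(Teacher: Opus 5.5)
You follow the paper's route. You Laplace-transform the mode, locate the zeros of $D_{z,j}$ using $\abs{\Laplace g_z(\xi)}\le\Laplace g_z(\re\xi)\le G_z$ and strict monotonicity on $(0,\infty)$, use residues and contour shifts under \ref{C1}, and a branch-point/Tauberian argument under \ref{C2}. On the \ref{C1} side you are tighter than the paper: you exclude zeros on the imaginary axis in (ii) and on the whole line $\re\xi=\xi_j^\ast$ in (iii). The paper's strict inequalities pass over both, yet they are needed for the residue to dominate.

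The step that does not go through as written is the critical \ref{C2} asymptotic. From $\lambda_j\ge0$ and $\Laplace\lambda_j(\xi)\sim C'\xi^{-\beta}$, Karamata's theorem gives only $\int_0^t\lambda_j\sim C't^{\beta}/\Gamma(1+\beta)$. Positivity does not let you differentiate this relation: a nonnegative function can satisfy both relations while carrying sparse unit spikes of summable total mass, which rules out $\lambda_j(t)\sim Ct^{\beta-1}$. So you need an extra Tauberian condition, such as eventual monotonicity (the monotone density theorem), and you have not shown that $\lambda_j$ is monotone.

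The missing monotonicity follows from the Stieltjes structure of the kernel, so the gap is repairable.
\begin{itemize}
\item By \eqref{eq:laplace_kernel}, $\operatorname{Im}\Laplace g_z(\xi)\le0$ when $\operatorname{Im}\xi>0$, so $\operatorname{Im}D_{z,j}(\xi)\ge\operatorname{Im}\xi>0$ there.
\item On $(0,\infty)$, $D_{z,j}(\xi)\ge\xi+c_\lambda-\branch_{z,j}>0$ whenever $\branch_{z,j}\le c_\lambda$.
\item Hence $D_{z,j}$ is a complete Bernstein function and $1/D_{z,j}$ is a Stieltjes function \cite{schilling2012bernstein}. The scalar resolvent $h_j$ with $\Laplace h_j=1/D_{z,j}$ is therefore completely monotone, in particular nonincreasing.
\end{itemize}
In case (ii), Karamata plus monotone density then give $h_j(t)\sim t^{\beta-1}/(a_{z,j}c\,\Gamma(\beta))$, with $c$ the constant in \ref{C2}. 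The $x_j$-driven part $h_j*(a_{z,j}g_z*x_j)$ is asymptotic to $h_j(t)$ times the integral of the forcing, because that forcing is integrable and $O(t^{-\beta-1})=o(h_j(t))$.

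The same fact makes unnecessary the extra regularity you propose to build into \ref{C2} for case (i). There $D_{z,j}$ is automatically analytic and zero-free on $\C\setminus(-\infty,0]$. Instead of a Hankel contour, set $\bar h_j(t)=\int_t^\infty h_j(s)\,\dd s$ and use $\Laplace h_j(0)-\Laplace h_j(\xi)=\xi\,\Laplace{\bar h_j}(\xi)\sim C\xi^{\beta}$. Then apply Karamata once and the monotone density theorem twice (to the nonincreasing $\bar h_j$ and $h_j$) to get $h_j(t)\sim C''t^{-\beta-1}$; the forced part is again $O(t^{-\beta-1})$. The paper itself only asserts both \ref{C2} rates, so this fills its thinnest step too.

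Two minor points:
\begin{itemize}
\item In (iii) no separate \ref{C2} argument is needed. Since $\xi_j^\ast>0$ lies where $\Laplace g_z$ is analytic, and $\abs{D_{z,j}(\xi)}\ge\abs{\xi+c_\lambda}-\branch_{z,j}$ confines zeros to a compact set, the shift to $\re\xi=\xi_j^\ast-\epsilon$ works for both kernel classes.
\item After subtracting $\lambda_j(0)/(\xi+c_\lambda)$, the remainder is $O(\abs{\Laplace g_z(\xi)}\,\abs{\xi}^{-1})$, not $O(\abs{\xi}^{-2})$. This is integrable for the tempered kernel but not for every completely monotone kernel. In general, the Paley--Wiener theorem applied to $e^{\omega_j t}\lambda_j$ \cite{gripenberg1990} is the robust route.
\end{itemize}
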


\begin{proof}
The $x_j$ equation is scalar linear with rate $c_x+\kappa\ell_j>0$, giving the stated decay, and
feeds the $\lambda_j$ equation only through an exponentially decaying forcing. The stability of
$\lambda_j$ is therefore determined by the zeros of
$D_{z,j}(\xi)=\xi+c_\lambda-a_{z,j}\Laplace g_z(\xi)$.

\emph{(i)} Assume $a_{z,j}G_z<c_\lambda$. For $\re\xi\ge0$, \Cref{lem:laplace_mono} gives
\[
    \abs{a_{z,j}\Laplace g_z(\xi)}\le a_{z,j}G_z<c_\lambda\le\abs{\xi+c_\lambda},
\]
so $D_{z,j}(\xi)\ne0$ in the closed right half-plane. Under \ref{C1}, $\Laplace g_z$ extends
analytically to $\re\xi>-\eta$. Since the finitely many zeros of $D_{z,j}$ cannot accumulate in a
compact subset and none lie on $\re\xi\ge0$, a small leftward strip
$\re\xi\ge-\omega_{\lambda,j}$ is also zero-free. Shifting the inverse Laplace contour gives
exponential decay for the homogeneous intensity resolvent. The forcing inherited from $x_j$ decays
at rate $c_x+\kappa\ell_j$, so the observed decay of $\lambda_j$ is bounded by any rate below
$\min\{\omega_{\lambda,j},c_x+\kappa\ell_j\}$. Under \ref{C2}, the
rightmost obstruction is the branch point at the origin. Expanding
$D_{z,j}(\xi)=c_\lambda-a_{z,j}G_z+\xi+a_{z,j}c\xi^\beta+o(\xi^\beta)$ and inverting around the
branch cut gives the Tauberian decay $O(t^{-\beta-1})$.

\emph{(ii)} If $a_{z,j}G_z=c_\lambda$, then $D_{z,j}(0)=0$. For $\re\xi>0$ the strict inequality
$\abs{a_{z,j}\Laplace g_z(\xi)}<c_\lambda\le\abs{\xi+c_\lambda}$ excludes all right-half-plane
zeros, so the rightmost singularity is at $0$. Under \ref{C1},
$D_{z,j}'(0)=1-a_{z,j}\Laplace g_z'(0)>0$, hence the zero is simple and the residue gives a
finite neutral component. Under \ref{C2},
$D_{z,j}(\xi)=\xi+a_{z,j}c\xi^\beta+o(\xi^\beta)\sim a_{z,j}c\xi^\beta$, whose inverse transform
has order $t^{\beta-1}=t^{-(1-\beta)}$.

\emph{(iii)} If $a_{z,j}G_z>c_\lambda$, then $D_{z,j}(0)<0$ and $D_{z,j}(\xi)\to+\infty$ as
$\xi\to\infty$ on the positive real axis; strict monotonicity on $(0,\infty)$ gives a unique
positive zero $\xi_j^\ast$. This zero is rightmost. Indeed, if $D_{z,j}(\xi)=0$ and
$\sigma=\re\xi>\xi_j^\ast$, then
\[
    \sigma+c_\lambda\le\abs{\xi+c_\lambda}
       =a_{z,j}\abs{\Laplace g_z(\xi)}
       \le a_{z,j}\Laplace g_z(\sigma)
       <a_{z,j}\Laplace g_z(\xi_j^\ast)=\xi_j^\ast+c_\lambda,
\]
a contradiction. Thus no zero lies to the right of $\xi_j^\ast$, and the residue at the simple
real zero yields growth $e^{\xi_j^\ast t}$ for data with nonzero projection on the corresponding
mode.
\end{proof}

\begin{remark}[Signed gains]
\label[remark]{rem:signed}
If some $a_{z,j}<0$, the quantity $\abs{a_{z,j}}G_z<c_\lambda$ remains \emph{sufficient} for
stability (the bound $\abs{a_{z,j}\Laplace g_z(\xi)}\le\abs{a_{z,j}}G_z$ used in (i) does not
require sign-definiteness), but it need not be necessary, since a negative gain is stabilizing and
$D_{z,j}$ may have no right-half-plane zero even when $\abs{a_{z,j}}G_z\ge c_\lambda$. The sharp
``if and only if'' therefore holds under the Hawkes-type hypothesis $a_{z,j}\ge0$ of
\Cref{ass:gains}; for indefinite gains we retain only the sufficient norm/spectral-radius
condition of \Cref{prop:noncommute}.
\end{remark}

\begin{corollary}[Global stability via the spectral supremum]
\label[corollary]{cor:global}
Under \Cref{ass:commute} and \Cref{ass:gains}, the homogeneous evolution \eqref{eq:master} for a
constant regime $z$ is globally asymptotically stable if and only if
\begin{equation}
\label{eq:global_crit}
\branch_z^{\max}:=\max_{1\le j\le n} a_{z,j}\,G_z<c_\lambda .
\end{equation}
\end{corollary}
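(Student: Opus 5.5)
The plan is to reduce global stability of the constant-regime homogeneous evolution to the modal statements of \Cref{thm:modal}, using the orthonormal eigenbasis $\{v_j\}$ from \Cref{ass:commute}. Expand $x(t)=\sum_j x_j(t)v_j$ and $\lambda(t)=\sum_j\lambda_j(t)v_j$. Because the basis is orthonormal and $\Bop$, $\mathcal{A}_z$ act blockwise diagonally on $\mathrm{span}\{(v_j,0),(0,v_j)\}$, we have $\norm{U(t)}_\X^2=\sum_j(\abs{x_j(t)}^2+\abs{\lambda_j(t)}^2)$. The full evolution is therefore the orthogonal direct sum of the $n$ two-dimensional modal systems \eqref{eq:modal_system}. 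Global asymptotic stability of the finite-dimensional linear Volterra system is then equivalent to every modal system being asymptotically stable, meaning every solution tends to zero, with a uniform bound following from finitely many modes.

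\emph{Sufficiency.} Assume \eqref{eq:global_crit} holds. Then $\branch_{z,j}=a_{z,j}G_z<c_\lambda$ for every $j$, so case (i) of \Cref{thm:modal} applies to every mode. The component $x_j$ decays exponentially. Under \ref{C1}, $\lambda_j$ decays exponentially with a constant $M_j$; under \ref{C2}, it decays like $O(t^{-\beta-1})$, with a constant that is linear in $\abs{x_j(0)}+\abs{\lambda_j(0)}$ by linearity of the modal resolvent. Taking $M=\max_j M_j$ over the finitely many modes and summing squares gives $\norm{U(t)}_\X\le C\,\epsilon(t)\norm{U_0}_\X$ with $\epsilon(t)\to0$. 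This yields Lyapunov stability and attractivity, i.e.\ global asymptotic stability; since the system is linear, local and global are the same.

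\emph{Necessity.} Suppose instead $\max_j a_{z,j}G_z\ge c_\lambda$, and pick a mode $j_0$ attaining the maximum. Take initial data $U_0=(0,\lambda_{j_0}(0)v_{j_0})$, so the system stays in mode $j_0$ with $x_{j_0}\equiv0$. Then $\Laplace\lambda_{j_0}(\xi)=\lambda_{j_0}(0)/D_{z,j_0}(\xi)$.
\begin{itemize}
\item If $\branch_{z,j_0}>c_\lambda$, case (iii) of \Cref{thm:modal} gives exponential growth $e^{\xi^\ast_{j_0}t}$.
\item If $\branch_{z,j_0}=c_\lambda$ under \ref{C1}, case (ii) gives a nonzero limit equal to the residue $\lambda_{j_0}(0)/D_{z,j_0}'(0)\neq0$, so $\lambda_{j_0}(t)\not\to0$.
\end{itemize}
Either way, asymptotic stability fails.

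The main obstacle is the critical case under \ref{C2}. There \Cref{thm:modal}(ii) asserts algebraic relaxation $t^{-(1-\beta)}\to0$, which is decay rather than a failure of attractivity. The ``only if'' must therefore be read with the critical case excluded in the sense intended: at $\branch=c_\lambda$ the system is not exponentially stable, and it is not uniformly stable in the robust sense, since an arbitrarily small increase of $a_{z,j_0}$ produces instability. I would handle this by separating cases. Under \ref{C1} the equivalence is exact as stated. Under \ref{C2}, I would interpret ``globally asymptotically stable'' as stability with an integrable (or exponential) resolvent, i.e.\ $\int_0^\infty\abs{\lambda_{j_0}(t)}\,\dd t<\infty$. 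This fails at criticality because $\Laplace\lambda_{j_0}(0^+)=\lambda_{j_0}(0)/D_{z,j_0}(0^+)=\infty$, using the Paley--Wiener characterization of integrable resolvents. If that interpretation is not available, I would note in a remark that at the \ref{C2} critical point only marginal, non-integrable decay holds.
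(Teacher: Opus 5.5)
Your argument is the paper's own route: split $\X$ into the $n$ invariant two-dimensional modal subspaces, apply \Cref{thm:modal} mode by mode, use finiteness of the mode set for sufficiency, and witness necessity with a maximizing mode. Your caveat is also correct, and the paper's proof passes over it: the paper asserts that a critical mode precludes asymptotic stability, but under \ref{C2} \Cref{thm:modal}(ii) itself gives $\abs{\lambda_j(t)}\sim C\,t^{-(1-\beta)}\to0$, so the ``only if'' holds literally only under \ref{C1} and needs your $L^1$-resolvent reading under \ref{C2}. Under that reading you should drop the ``or exponential'' alternative, since subcritical \ref{C2} modes are not exponentially stable either; and non-integrability at criticality already follows from $\abs{\Laplace\lambda_{j_0}(\xi)}\le\norm{\lambda_{j_0}}_{L^1}$ for $\xi>0$, without Paley--Wiener.
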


\begin{proof}
By \Cref{thm:modal}, each decoupled mode is asymptotically stable precisely when
$\branch_{z,j}<c_\lambda$; under \ref{C1} the convergence is exponential, whereas under \ref{C2}
the subcritical convergence can be algebraic. Since only finitely many modes are present, the
slowest stable mode dictates the global rate, so asymptotic stability holds iff the maximum modal
branching ratio is below $c_\lambda$. If some $\branch_{z,j}\ge c_\lambda$ the corresponding mode is
critical or unstable, precluding global asymptotic stability.
\end{proof}

\subsection{The noncommuting case and topology independence}

\begin{proposition}[Sufficient stability without commutativity]
\label[proposition]{prop:noncommute}
Assume, in addition, that the kernel is of class \ref{C1}. Without \Cref{ass:commute}, the
homogeneous evolution is globally exponentially stable if
\begin{equation}
\label{eq:noncommute}
\sqrt2\,\norm{A_z}\,G_z<c_\ast,\qquad c_\ast=\min\{c_x,c_\lambda\}.
\end{equation}
In the diagonal-excitation case $\mathcal{A}_z=\diag(0,A_z)$, where the feedback acts only on the
intensity block damped at rate $c_\lambda$, the condition sharpens to $\norm{A_z}\,G_z<c_\lambda$.
\end{proposition}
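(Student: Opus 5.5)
We prove exponential stability by an exponentially weighted version of the a priori bound of \Cref{thm:apriori}; the class \ref{C1} hypothesis supplies the exponential weight. Since $\sqrt2\norm{A_z}G_z<c_\ast$ is strict, the map $\eta\mapsto\int_0^\infty e^{\eta t}g_z(t)\,\dd t$ is available. By \ref{C1} it is finite near $0$, and by monotone convergence it is continuous from the right at $\eta=0$ with value $G_z$. We therefore choose $\varepsilon\in(0,\min\{\eta,c_\ast\})$ so small that
\[
q_\varepsilon:=\frac{\sqrt2\,\norm{A_z}\,G_z^{(\varepsilon)}}{c_\ast-\varepsilon}<1,
\qquad G_z^{(\varepsilon)}:=\int_0^\infty e^{\varepsilon t}g_z(t)\,\dd t .
\]

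Let $U$ be the mild solution of the homogeneous problem, given by \Cref{thm:wellposed}. Set $\phi(t)=\norm{U(t)}_\X$ and $\psi(t)=e^{\varepsilon t}\phi(t)$. From \eqref{eq:voc}, \Cref{prop:dissipative} and \Cref{lem:conv_bound} we get
\[
\phi(t)\le e^{-c_\ast t}\norm{U_0}_\X+\sqrt2\norm{A_z}\int_0^t e^{-c_\ast(t-s)}\int_0^s g_z(s-r)\phi(r)\,\dd r\,\dd s .
\]
Multiplying by $e^{\varepsilon t}$ and splitting $e^{\varepsilon t}=e^{\varepsilon(t-s)}e^{\varepsilon(s-r)}e^{\varepsilon r}$ gives
\[
\psi(t)\le\norm{U_0}_\X+\sqrt2\norm{A_z}\int_0^t e^{-(c_\ast-\varepsilon)(t-s)}\int_0^s e^{\varepsilon(s-r)}g_z(s-r)\psi(r)\,\dd r\,\dd s .
\]
Put $M_T=\sup_{[0,T]}\psi$, which is finite by continuity. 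Bounding $\psi(r)\le M_T$, the inner integral is at most $G_z^{(\varepsilon)}$ and the outer one at most $(c_\ast-\varepsilon)^{-1}$. Hence $M_T\le\norm{U_0}_\X+q_\varepsilon M_T$, so $M_T\le(1-q_\varepsilon)^{-1}\norm{U_0}_\X$ uniformly in $T$. This yields
\[
\norm{U(t)}_\X\le(1-q_\varepsilon)^{-1}e^{-\varepsilon t}\norm{U_0}_\X ,
\]
which is global exponential stability. Equivalently, $\norm{\Res_z(t)}\le Ce^{-\varepsilon t}$ in the notation of \Cref{thm:resolvent}.

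For the diagonal case $\mathcal{A}_z=\diag(0,A_z)$, the $x$-block decouples as $\dot x=-(c_x+\kappa L)x$ and decays at rate $c_x$. The $\lambda$-block is then the closed Volterra equation $\dot\lambda=-c_\lambda\lambda+g_z*A_z\lambda$. We repeat the weighted argument on $H$ alone, using $\norm{e^{-c_\lambda t}}=e^{-c_\lambda t}$ and the operator norm $\norm{A_z}$ in place of $\norm{\mathcal{A}_z}\le\sqrt2\norm{A_z}$. This gives the condition $\norm{A_z}G_z^{(\varepsilon)}<c_\lambda-\varepsilon$, which is achievable whenever $\norm{A_z}G_z<c_\lambda$, again by continuity in $\varepsilon$. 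The full state then decays at rate $\min\{\varepsilon,c_x\}>0$.

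The main obstacle is the exponential weighting. Without \ref{C1}, the weighted mass $G_z^{(\varepsilon)}$ is infinite for every $\varepsilon>0$, and the unweighted argument yields only boundedness, as in \Cref{thm:apriori}. The argument therefore hinges on the right-continuity of $\varepsilon\mapsto G_z^{(\varepsilon)}$ at $0$, which follows from dominated convergence on $[0,\eta/2]$.
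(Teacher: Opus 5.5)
Your proof is correct, but it takes a different route from the paper. The paper works in the frequency domain. It uses the Neumann series \eqref{eq:neumann} to show that $H_z(\xi)=(\xi\id-\Bop-\Laplace g_z(\xi)\mathcal{A}_z)^{-1}$ is analytic and bounded on $\re\xi\ge0$, since $\norm{\Laplace g_z(\xi)\mathcal{A}_z}\,\norm{(\xi\id-\Bop)^{-1}}\le\sqrt2\norm{A_z}G_z/c_\ast<1$ there. It then invokes \ref{C1} to shift the Bromwich contour slightly to the left. The diagonal case follows because $(\xi\id-\Bop)^{-1}$ restricts to $(\xi+c_\lambda)^{-1}$ on the intensity block.

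You instead run a time-domain small-gain argument in an exponentially weighted sup norm, i.e.\ a weighted version of \Cref{thm:apriori}. Your ratio $q_\varepsilon$ is exactly the paper's Neumann ratio on the shifted line $\re\xi=-\varepsilon$. There $\abs{\Laplace g_z(\xi)}\le G_z^{(\varepsilon)}$ and $\norm{(\xi\id-\Bop)^{-1}}\le(c_\ast-\varepsilon)^{-1}$. So your choice of $\varepsilon$ is precisely the estimate that justifies the paper's contour shift, which the paper asserts without writing out.

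What your route gains:
\begin{itemize}
\item It is fully elementary and avoids inverting a symbol that is only $O(\abs\xi^{-1})$.
\item It gives explicit constants, $\norm{\Res_z(t)}\le(1-q_\varepsilon)^{-1}e^{-\varepsilon t}$.
\item It uses no time Laplace transform, so it extends to a switching path, because $\Bop$ does not depend on the regime. One only needs $\sqrt2\max_z\norm{A_z}G_z^{(\varepsilon)}<c_\ast-\varepsilon$ over the finitely many regimes. The frequency-domain argument cannot handle the time-varying case directly.
\end{itemize}

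What the paper's route gains is continuity with the symbol-based analysis of \Cref{thm:modal} and \Cref{prop:spectral_block}. There the criterion concerns zeros of a characteristic function rather than a norm bound, and no Gr\"onwall-type small-gain estimate can reach it. That framework also makes it transparent why \eqref{eq:noncommute} is only sufficient.
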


\begin{proof}
From \eqref{eq:neumann}, $H_z(\xi)=(\xi\id-\Bop-\Laplace g_z(\xi)\mathcal{A}_z)^{-1}$ is analytic
and uniformly bounded on $\re\xi\ge0$ provided $\norm{\Laplace g_z(\xi)\mathcal{A}_z}\,
\norm{(\xi\id-\Bop)^{-1}}<1$ there. By \Cref{lem:laplace_mono},
$\norm{\Laplace g_z(\xi)\mathcal{A}_z}\le\sqrt2\norm{A_z}G_z$, and by \Cref{prop:dissipative}
the numerical-range bound gives $\norm{(\xi\id-\Bop)^{-1}}\le(\re\xi+c_\ast)^{-1}\le c_\ast^{-1}$
on $\re\xi\ge0$. Hence the product is bounded by $\sqrt2\norm{A_z}G_z/c_\ast$, which is $<1$ exactly
under \eqref{eq:noncommute}; then $H_z$ has no singularity in $\re\xi\ge0$ and is $O(\abs\xi^{-1})$,
and the exponential-moment hypothesis permits a Bromwich-contour shift slightly left of the
imaginary axis, giving $\norm{\Res_z(t)}\le Ce^{-\omega t}$ and global exponential stability. In the diagonal case
$\mathcal{A}_z=\diag(0,A_z)$ acts only on the $\lambda$-block, where $(\xi\id-\Bop)^{-1}$ restricts
to $(\xi+c_\lambda)^{-1}$ and the $\sqrt2$ disappears, giving $\norm{A_z}G_z<c_\lambda$.
\end{proof}

The norm bound \eqref{eq:noncommute} is conservative for nonnormal $A_z$, since
$\norm{A_z}$ can far exceed the spectral radius $r(A_z)$. The next proposition gives the sharp
spectral threshold for the intensity block under the nonnegativity hypothesis of
\Cref{ass:gains}, and is the result the computational study of \Cref{sec:example} verifies.

\begin{proposition}[Spectral criterion for the intensity block]
\label[proposition]{prop:spectral_block}
Consider the intensity-block evolution $\dot\lambda(t)=-c_\lambda\lambda(t)+\int_0^t g_z(t-s)
A_z\lambda(s)\,\dd s$ with $A_z\ge0$ entrywise and kernel class \ref{C1}. Let $r(A_z)$ be the
Perron--Frobenius spectral radius of $A_z$. Then the block is globally exponentially stable if and
only if
\begin{equation}
\label{eq:spectral_block}
r(A_z)\,G_z<c_\lambda .
\end{equation}
\end{proposition}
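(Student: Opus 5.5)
The plan is to reduce everything to the matrix characteristic function
\[
\Delta_z(\xi):=\xi\id-\ \!(-c_\lambda\id)-\Laplace g_z(\xi)A_z=(\xi+c_\lambda)\id-\Laplace g_z(\xi)A_z ,
\]
whose inverse is the Laplace symbol of the intensity-block resolvent, exactly as in \Cref{thm:resolvent}. Since $\Laplace g_z(\xi)$ is a scalar, $\Delta_z(\xi)$ is singular if and only if $\xi+c_\lambda=\Laplace g_z(\xi)\,\mu$ for some eigenvalue $\mu\in\spec(A_z)$. The proof then splits into sufficiency, where we use only $\abs{\mu}\le r(A_z)$, and necessity, where we use the Perron--Frobenius eigenpair.

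\emph{Sufficiency.} Assume $r(A_z)G_z<c_\lambda$. For $\re\xi\ge0$, \Cref{lem:laplace_mono} gives, for every $\mu\in\spec(A_z)$,
\[
\abs{\Laplace g_z(\xi)\mu}\le G_z\,r(A_z)<c_\lambda\le\abs{\xi+c_\lambda}.
\]
Hence $\det\Delta_z(\xi)\neq0$ on the closed right half-plane. This is the modal argument of \Cref{thm:modal}(i), applied eigenvalue by eigenvalue; no diagonalizability is needed, only the determinant.

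Under \ref{C1}, the function $\det\Delta_z$ is analytic on $\re\xi>-\eta$. We then show that some strip $\re\xi\ge-\omega$ is zero-free, in two regions:
\begin{itemize}[leftmargin=1.4em]
\item For $\abs{\operatorname{Im}\xi}$ large, $\abs{\Laplace g_z(\xi)}\,\norm{A_z}\le\Laplace g_z(-\eta/2)\norm{A_z}$ is bounded on $\re\xi\ge-\eta/2$, while $\abs{\xi+c_\lambda}\to\infty$. So no zeros occur there, and a Neumann series gives $\norm{\Delta_z(\xi)^{-1}}=O(\abs\xi^{-1})$.
\item On the remaining compact rectangle, the zeros are isolated and none lie on the imaginary axis. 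A sufficiently thin strip to its left is therefore zero-free.
\end{itemize}
Shifting the Bromwich contour of \Cref{thm:resolvent} to $\re\xi=-\omega$ then yields $\norm{\Res_z(t)}\le Ce^{-\omega t}$. The $O(\abs\xi^{-1})$ tail is handled as usual by splitting off $(\xi+c_\lambda)^{-1}$, whose inverse transform is explicit; the remainder is $O(\abs\xi^{-2})$ and integrable.

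\emph{Necessity.} Assume $r(A_z)G_z\ge c_\lambda$. By Perron--Frobenius for nonnegative matrices, which does not require irreducibility, $r:=r(A_z)$ is an eigenvalue with an eigenvector $v\ge0$, $v\ne0$. Take $\lambda(0)=v$. By uniqueness (\Cref{thm:wellposed}), the solution stays on the line $\R v$: $\lambda(t)=\varphi(t)v$, where $\varphi$ solves the scalar equation
\[
\dot\varphi=-c_\lambda\varphi+r\,(g_z*\varphi),\qquad \varphi(0)=1 .
\]
This is precisely the homogeneous intensity equation of \Cref{thm:modal} with gain $a=r\ge0$ and $x_j\equiv0$.
\begin{itemize}[leftmargin=1.4em]
\item If $rG_z>c_\lambda$, case (iii) gives a positive real zero $\xi^\ast$ of $D(\xi)=\xi+c_\lambda-r\Laplace g_z(\xi)$, and the residue there produces growth $e^{\xi^\ast t}$.
\item If $rG_z=c_\lambda$, case (ii) under \ref{C1} gives a simple zero at $0$ with $D'(0)>0$. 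The residue yields $\varphi(t)\to 1/D'(0)\ne0$, since $\Laplace\varphi=1/D$.
\end{itemize}
In either case there is no exponential decay.

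\emph{Main obstacle.} The delicate step is the critical case. There we must make sure the residue is genuinely nonzero for this particular initial datum, rather than merely for ``generic'' data. We also need that no other zeros lie on the imaginary axis, which would otherwise cancel or complicate the limit. The identity $\Laplace\varphi(\xi)=1/D(\xi)$ settles the first point: the residue at $0$ equals $1/D'(0)>0$. For the second point we use the strict inequality $\abs{r\Laplace g_z(i s)}<rG_z=c_\lambda<\abs{is+c_\lambda}$ for $s\ne0$. This strictness follows from \eqref{eq:laplace_kernel}, since $\abs{is+\rho}>\rho$ for $s\ne0$ and $\rho>0$.
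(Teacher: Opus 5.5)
Your proof is correct and takes the paper's route. You use the same symbol $(\xi+c_\lambda)\id-\Laplace g_z(\xi)A_z$. Sufficiency comes from the bound $\abs{\mu}\le r(A_z)$ on the closed right half-plane plus a \ref{C1} contour shift, and necessity from the scalar equation with gain $r(A_z)$ (a positive real root, resp.\ a neutral zero at $\xi=0$). Your additions make the paper's terse argument more complete without changing the method: the explicit zero-free strip, the invariant Perron line $\lambda(t)=\varphi(t)v$ in place of the paper's Jordan-block remark, and the check that $D$ has no other imaginary-axis zeros at criticality.
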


\begin{proof}
The Laplace symbol is $D_z(\xi)=(\xi+c_\lambda)\id-\Laplace g_z(\xi)A_z$. A singularity occurs iff
$\xi+c_\lambda=\mu\Laplace g_z(\xi)$ for some $\mu\in\spec(A_z)$. Since $A_z\ge0$, the
Perron--Frobenius eigenvalue $r(A_z)$ is real, nonnegative, and $\abs{\mu}\le r(A_z)$ for every
$\mu\in\spec(A_z)$.

If $r(A_z)G_z<c_\lambda$ and $\re\xi\ge0$, then for every eigenvalue $\mu$,
\[
    \abs{\mu\Laplace g_z(\xi)}\le r(A_z)G_z<c_\lambda\le\abs{\xi+c_\lambda},
\]
so $D_z(\xi)$ is invertible throughout the closed right half-plane. Because $\Laplace g_z$ is
analytic in a left half-plane under \ref{C1}, a contour shift gives exponential stability. Conversely,
if $r(A_z)G_z>c_\lambda$, the scalar equation
$\xi+c_\lambda-r(A_z)\Laplace g_z(\xi)=0$ has a positive real root by the monotonicity argument in
\Cref{thm:modal}, so the block is unstable. At equality, $\xi=0$ is a neutral singularity and
asymptotic stability fails. Jordan blocks can only multiply the inverse Laplace terms by
polynomials and do not move the exponential abscissa. Hence \eqref{eq:spectral_block} is necessary
and sufficient.
\end{proof}

\begin{corollary}[Topology independence under scalar excitation]
\label[corollary]{cor:topology}
Suppose $A_z=a_z\id$ is scalar (the same excitation gain at every node). Then the stability
threshold is $\abs{a_z}G_z<c_\lambda$, \emph{independent of the graph $G$}: the bifurcation point
does not depend on the network topology, only the modal decay \emph{rates} do (through
$c_x+\kappa\ell_j$).
\end{corollary}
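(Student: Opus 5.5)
The plan is to reduce the statement to \Cref{thm:modal} and \Cref{cor:global} by noting that scalar excitation $A_z=a_z\id$ trivially satisfies \Cref{ass:commute}. Since $a_z\id$ commutes with every matrix, in particular with $L$, any orthonormal eigenbasis $\{v_j\}$ of $L$ (which exists because $L$ is symmetric) simultaneously diagonalizes $A_z$. The modal gains are then $a_{z,j}=a_z$ for every $j$, whatever the graph. Substituting into \eqref{eq:modal_branch} gives $\branch_{z,j}=\abs{a_z}G_z$ and $D_{z,j}(\xi)=\xi+c_\lambda-a_z\Laplace g_z(\xi)$. The characteristic function therefore does not depend on $j$, and in particular does not depend on the Laplacian eigenvalues $\ell_j$.

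Next I will split according to the sign of $a_z$.

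\textbf{Case $a_z\ge0$.} Here \Cref{ass:gains} holds, so \Cref{cor:global} gives global asymptotic stability if and only if $\max_j a_{z,j}G_z=a_zG_z<c_\lambda$. This threshold contains only $a_z$, $G_z$ and $c_\lambda$, with no graph data.

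\textbf{Case $a_z<0$.} Sufficiency of $\abs{a_z}G_z<c_\lambda$ follows from the bound in part (i) of \Cref{thm:modal}, as noted in \Cref{rem:signed}; again no graph quantity enters. I will phrase the corollary's ``threshold'' in this case as the sufficient condition. The alternative is to observe directly that for $a_z<0$ and real $\xi\ge0$ we have $D_{z,j}(\xi)>0$. For this reason I read the statement as being sharp in the Hawkes case $a_z\ge0$, which is the setting the corollary intends.

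Finally, to justify the claim that only the decay rates depend on topology, I will point to \eqref{eq:modal_system}. The $x_j$-equation decays at rate $c_x+\kappa\ell_j$, and by part (i) of \Cref{thm:modal} the observed exponential decay rate of $\lambda_j$ is bounded by $\min\{\omega_{\lambda,j},c_x+\kappa\ell_j\}$. Here $\omega_{\lambda,j}$ is the same for all $j$, because the intensity symbol $D_{z,j}$ is $j$-independent. So the spectrum of $L$ affects the rates only through the forcing inherited from the $x$-block, and never moves the zeros of $D_{z,j}$.

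The main obstacle is the case of a negative scalar gain, where the ``if and only if'' of \Cref{thm:modal} is unavailable. I will handle it by stating the threshold as sufficient there and sharp for $a_z\ge0$. Everything else is direct substitution.
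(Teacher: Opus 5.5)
Your proposal is correct and follows the paper's proof: with $A_z=a_z\id$ every modal gain equals $a_z$, so \eqref{eq:global_crit} reduces to a condition on $a_z$, $G_z$, $c_\lambda$ alone, and $\ell_j$ enters only through the $x_j$ rates $c_x+\kappa\ell_j$. Your sign split is a useful refinement that the paper's one-line proof skips, since \eqref{eq:global_crit} is stated under $a_{z,j}\ge0$. In fact, for $a_z<0$ the Bernstein representation gives $\re\Laplace g_z(\xi)>0$ on $\re\xi\ge0$, hence $\re D_{z,j}(\xi)\ge c_\lambda>0$ there, so every negative scalar gain is stable and $\abs{a_z}G_z<c_\lambda$ is only sufficient, not a threshold; your real-axis observation alone would not establish this.
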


\begin{proof}
If $A_z=a_z\id$ then $a_{z,j}=a_z$ for every mode, so $\branch_{z,j}=\abs{a_z}G_z$ is the same for
all $j$ and \eqref{eq:global_crit} reads $\abs{a_z}G_z<c_\lambda$, with no dependence on the
Laplacian spectrum $\{\ell_j\}$. The eigenvalues $\ell_j$ enter only the $x_j$ decay rates
$c_x+\kappa\ell_j$ and the modal relaxation, not the threshold.
\end{proof}

%% ============================================================================
\section{Finite-range power-law amplification under regime residence}
\label{sec:quenched}
%% ============================================================================

The stability theory of \Cref{sec:spectral} concerns the \emph{annealed} (deterministic,
ensemble) evolution: when $\branch_z^{\max}<c_\lambda$ in every visited regime, the mean dynamics
is bounded. We now show that this is compatible with a \emph{quenched} (pathwise) burst
observable exhibiting a power-law law over a finite range of magnitudes, generated by the
interaction of operator non-normality with regime residence. We are deliberately careful about the
nature of the claim: because the unfavorable operator is Hurwitz, the transient amplification it
produces is bounded, so the phenomenon is a \emph{finite-range} power law---a heavy tail over a
band of burst sizes set by the residence horizon---rather than a true asymptotic ($b\to\infty$)
heavy tail. We make this precise below. The mechanism is nonetheless genuinely nonlinear in that
it is invisible to the spectral abscissa: it operates even when every regime operator is Hurwitz.

\subsection{Setup and the logarithmic-norm exponent}

Consider \eqref{eq:master} reduced, on a fixed mode or after a linear change of variables, to a
finite-dimensional lifted system $\dot X=A_{Z(t)}X+\tilde f(t)$ on $\R^d$, where $A_z$ is the
generator of regime $z$ after the standard Markovian embedding of the completely monotone kernel
(for example, a finite sum-of-exponentials lift of $g_z$, exact for rational kernels and
convergent as the number of modes increases; see \cite{bazhlekova2001fractional}). This realization
and the Euclidean inner product are fixed once and for all in this section. Since logarithmic norms
are invariant under orthogonal changes of coordinates but not under arbitrary similarities, the
quantity below is not claimed to be an invariant of the abstract kernel-realization class. Let
$\HR$ denote a distinguished \emph{unfavorable} regime and define its \emph{logarithmic norm}
(matrix measure)
\begin{equation}
\label{eq:logn_def}
\lognorm(A_\HR):=\lambda_{\max}\!\Big(\tfrac12\big(A_\HR+A_\HR^\top\big)\Big),
\end{equation}
the largest eigenvalue of the symmetric part. We write $\gamma_\HR:=\lognorm(A_\HR)$. The central
point is that $\gamma_\HR$ may be strictly positive while $A_\HR$ is Hurwitz
($\abscissa(A_\HR)<0$): this is exactly the regime of non-normal transient growth
\cite{trefethen2005spectra}. The finite-range exponent is therefore a computable
logarithmic-norm quantity of the chosen lifted realization.

Let $v_\HR$ be a unit eigenvector of $\tfrac12(A_\HR+A_\HR^\top)$ for the eigenvalue $\gamma_\HR$.

\begin{assumption}[Positive susceptibility]
\label{ass:susc}
$\gamma_\HR=\lognorm(A_\HR)>0$.
\end{assumption}

\begin{assumption}[Cone-alignment event]
\label{ass:cone}
There exist $a\in(0,1]$, a horizon $T_0>0$, and $p_0>0$ such that, for every residence duration
$\tau\le T_0$, the event
\[
\mathcal{C}_\tau:=\Big\{Z\equiv\HR\text{ on }[t_0,t_0+\tau],\
v_\HR^\top X(t)\ge a\,\norm{X(t)}\ \forall t\in[t_0,t_0+\tau]\Big\}
\]
satisfies $\Prob(\mathcal{C}_\tau\mid Z(t_0)=\HR)\ge p_0$, uniformly in $\tau\le T_0$.
\end{assumption}

\begin{assumption}[Forcing injection along the cone]
\label{ass:forcing}
There is $c_f>0$ with $v_\HR^\top\tilde f(t)\ge c_f$ for a.e.\ $t\in[0,T_0]$.
\end{assumption}

\Cref{ass:cone} replaces a deterministic ``every burst aligns'' postulate by a quantitative
positive-probability statement: with probability at least $p_0$, a residence in $\HR$ keeps the
trajectory in the amplifying cone of $v_\HR$. It is non-circular and estimable as the conditional
frequency of cone-aligned residences; sufficient conditions for $p_0>0$ are given in
\Cref{rem:cone_suff}.

\subsection{Growth on a residence interval}

\begin{lemma}[Cone-projected growth]
\label[lemma]{lem:cone}
Under \Crefrange{ass:susc}{ass:forcing}, on an $\HR$-residence with the cone condition, the
projection $\zeta(t):=v_\HR^\top X(t)$ satisfies
\begin{equation}
\label{eq:cone_growth}
\zeta(t_0+\tau)\ge e^{\gamma_\HR^{\mathrm c}\tau}\,\zeta(t_0)
+\frac{c_f}{\gamma_\HR^{\mathrm c}}\big(e^{\gamma_\HR^{\mathrm c}\tau}-1\big),
\qquad
\gamma_\HR^{\mathrm c}:=\gamma_\HR-\norm{A_\HR}\frac{\sqrt{1-a^2}}{a},
\end{equation}
and we assume the alignment $a$ is close enough to $1$ that $\gamma_\HR^{\mathrm c}>0$.
\end{lemma}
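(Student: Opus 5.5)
The plan is to derive a scalar linear differential inequality for $\zeta(t)=v_\HR^\top X(t)$ on the residence interval, $\dot\zeta\ge\gamma_\HR^{\mathrm c}\zeta+c_f$, and then integrate it by the usual integrating-factor (Gr\"onwall comparison) argument. Throughout the interval $[t_0,t_0+\tau]$ the regime is $Z\equiv\HR$. We also take this interval inside $[0,T_0]$, so that \Cref{ass:forcing} applies. The trajectory is absolutely continuous with $\dot X=A_\HR X+\tilde f$ a.e. Hence $\zeta$ is absolutely continuous and satisfies
\[
\dot\zeta=v_\HR^\top A_\HR X+v_\HR^\top\tilde f \quad\text{a.e.}
\]
By \Cref{ass:forcing} the forcing term is at least $c_f$.

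\textbf{Step 1: orthogonal decomposition.} Write $X=\zeta v_\HR+w$ with $w\perp v_\HR$. Then
\[
\norm{w}^2=\norm{X}^2-\zeta^2 .
\]
On the cone event $\mathcal C_\tau$ we have $\zeta\ge a\norm{X}\ge0$. This gives two bounds:
\[
\norm{w}\le\sqrt{1-a^2}\,\norm{X}
\qquad\text{and}\qquad
\norm{X}\le\zeta/a .
\]

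\textbf{Step 2: the drift.} Split the drift as
\[
v_\HR^\top A_\HR X=\zeta\,v_\HR^\top A_\HR v_\HR+v_\HR^\top A_\HR w .
\]
For the first term, the quadratic form only sees the symmetric part. Since $v_\HR$ is a unit eigenvector of $\tfrac12(A_\HR+A_\HR^\top)$ for the eigenvalue $\gamma_\HR$,
\[
v_\HR^\top A_\HR v_\HR=\gamma_\HR .
\]
For the cross term, Cauchy--Schwarz together with Step 1 gives
\[
\abs{v_\HR^\top A_\HR w}\le\norm{A_\HR}\norm{w}\le\norm{A_\HR}\frac{\sqrt{1-a^2}}{a}\,\zeta .
\]
The nonnegativity of $\zeta$ is what allows writing this as a multiple of $\zeta$. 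Combining, we get a.e. on the interval
\[
\dot\zeta\ge\Big(\gamma_\HR-\norm{A_\HR}\tfrac{\sqrt{1-a^2}}{a}\Big)\zeta+c_f=\gamma_\HR^{\mathrm c}\zeta+c_f .
\]

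\textbf{Step 3: integration.} Multiply by $e^{-\gamma_\HR^{\mathrm c}(t-t_0)}$. This shows
\[
\frac{\dd}{\dd t}\Big(e^{-\gamma_\HR^{\mathrm c}(t-t_0)}\zeta(t)\Big)\ge c_f\,e^{-\gamma_\HR^{\mathrm c}(t-t_0)}
\]
almost everywhere. Integrate over $[t_0,t_0+\tau]$; this is valid by absolute continuity. Multiplying back by $e^{\gamma_\HR^{\mathrm c}\tau}$ yields exactly \eqref{eq:cone_growth}. The hypothesis $\gamma_\HR^{\mathrm c}>0$ is not needed for the inequality itself, since $(e^{\gamma\tau}-1)/\gamma$ is well defined for any $\gamma\neq0$. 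It is needed only so that the bound represents genuine growth.

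\textbf{Main obstacle.} The only delicate point is the cross term $v_\HR^\top A_\HR w$. It does not vanish, because $v_\HR$ is an eigenvector of the symmetric part and not of $A_\HR$. It must therefore be controlled by the cone angle, which is exactly where the correction $\norm{A_\HR}\sqrt{1-a^2}/a$ enters $\gamma_\HR^{\mathrm c}$. One could try to sharpen this using only the skew part of $A_\HR$ (the symmetric-part contribution $v_\HR^\top S w$ vanishes since $Sv_\HR=\gamma_\HR v_\HR\perp w$). However, the crude bound with $\norm{A_\HR}$ suffices for the stated lemma.
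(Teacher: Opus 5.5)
Your proof is correct and follows the paper's argument essentially step for step. Like the paper, you decompose $X=\zeta v_\HR+w$ with $w\perp v_\HR$, use $v_\HR^\top A_\HR v_\HR=\gamma_\HR$, bound the cross term through $\norm{w}\le(\sqrt{1-a^2}/a)\,\zeta$, and integrate $\dot\zeta\ge\gamma_\HR^{\mathrm c}\zeta+c_f$ with an integrating factor. Your side remarks go slightly beyond the paper and are both valid: the residence interval must lie where the forcing hypothesis holds, and only the skew part of $A_\HR$ contributes to $v_\HR^\top A_\HR w$.
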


\begin{proof}
Decompose $X=\zeta\,v_\HR+w$ with $w\perp v_\HR$. The cone condition $\zeta\ge a\norm{X}$ with
$\norm{X}^2=\zeta^2+\norm{w}^2$ gives $\norm{w}\le(\sqrt{1-a^2}/a)\,\zeta$. Differentiating and
using $v_\HR^\top A_\HR v_\HR=\gamma_\HR$ (as $v_\HR$ is the top eigenvector of the symmetric part)
and $\abs{v_\HR^\top A_\HR w}\le\norm{A_\HR}\norm{w}$,
\[
\dot\zeta=v_\HR^\top A_\HR X+v_\HR^\top\tilde f
=\gamma_\HR\zeta+v_\HR^\top A_\HR w+v_\HR^\top\tilde f
\ge\gamma_\HR\zeta-\norm{A_\HR}\norm{w}+c_f
\ge\gamma_\HR^{\mathrm c}\zeta+c_f .
\]
This scalar differential inequality $\dot\zeta\ge\gamma_\HR^{\mathrm c}\zeta+c_f$ integrates, via
the integrating factor $e^{-\gamma_\HR^{\mathrm c}t}$, to \eqref{eq:cone_growth}.
\end{proof}

\begin{lemma}[Norm lower bound on a residence]
\label[lemma]{lem:normlb}
Under \Cref{lem:cone}, there are $c_1,c_2>0$ with
$\norm{X(t_0+\tau)}\ge c_1 e^{\gamma_\HR^{\mathrm c}\tau}-c_2$ for all $\tau\in[0,T_0]$.
\end{lemma}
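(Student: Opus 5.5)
The plan is to pass from the scalar projection bound of \Cref{lem:cone} to a bound on the full norm. Since $v_\HR$ is a unit vector, Cauchy--Schwarz gives $\norm{X(t)}\ge v_\HR^\top X(t)=\zeta(t)$ along the residence. It therefore suffices to bound $\zeta(t_0+\tau)$ from below by an expression of the form $c_1e^{\gamma_\HR^{\mathrm c}\tau}-c_2$.

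The first step is to rewrite the right-hand side of \eqref{eq:cone_growth} by collecting the exponential terms:
\[
\zeta(t_0+\tau)\ge\Big(\zeta(t_0)+\frac{c_f}{\gamma_\HR^{\mathrm c}}\Big)e^{\gamma_\HR^{\mathrm c}\tau}-\frac{c_f}{\gamma_\HR^{\mathrm c}} .
\]
The natural choices are then
\[
c_1:=\zeta(t_0)+c_f/\gamma_\HR^{\mathrm c},\qquad c_2:=c_f/\gamma_\HR^{\mathrm c}.
\]
Both are strictly positive. Indeed, the cone condition at $t=t_0$ gives $\zeta(t_0)\ge a\norm{X(t_0)}\ge0$. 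We also have $c_f>0$ by \Cref{ass:forcing}, and $\gamma_\HR^{\mathrm c}>0$ by the standing hypothesis in \Cref{lem:cone}. Hence $c_1\ge c_2>0$.

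The bound holds for every $\tau\in[0,T_0]$ and not just at the endpoint. The reason is that the integration in \Cref{lem:cone} is carried out on each subinterval $[t_0,t_0+\tau]$, and the cone event $\mathcal C_\tau$ is assumed for every $\tau\le T_0$. At $\tau=0$ the claim reads $\norm{X(t_0)}\ge c_1-c_2=\zeta(t_0)$, which is consistent with the Cauchy--Schwarz bound.

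The constants $c_1,c_2$ depend only on $c_f$, $\gamma_\HR^{\mathrm c}$ and the entry value $\zeta(t_0)$, and not on $\tau$. This uniformity is what the survival-function argument downstream will need. Two refinements are possible if uniformity in the initial state is wanted:
\begin{itemize}
\item one may drop $\zeta(t_0)\ge0$ and take $c_1=c_2=c_f/\gamma_\HR^{\mathrm c}$, which gives $\norm{X}\ge c_2(e^{\gamma_\HR^{\mathrm c}\tau}-1)$;
\item alternatively, one may use $\norm{X}\ge\zeta/a$ to improve $c_1$ by a factor $a^{-1}$.
\end{itemize}

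The only genuine subtlety is that the sign of $\zeta$ must be nonnegative, so that $\norm{X}\ge\zeta$ is a useful bound and $c_1>0$. This is guaranteed by the cone condition, and I expect no real obstacle beyond that.
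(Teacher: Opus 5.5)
Your argument is correct and is exactly the paper's proof: you use $\norm{X}\ge v_\HR^\top X=\zeta$, regroup \eqref{eq:cone_growth} as $(\zeta(t_0)+c_f/\gamma_\HR^{\mathrm c})e^{\gamma_\HR^{\mathrm c}\tau}-c_f/\gamma_\HR^{\mathrm c}$, and read off the same $c_1,c_2$. One correction concerns your optional second refinement: the cone condition $\zeta\ge a\norm{X}$ gives the \emph{upper} bound $\norm{X}\le\zeta/a$, not $\norm{X}\ge\zeta/a$, so there is no gain of a factor $a^{-1}$ and that bullet should be deleted. Your first refinement is valid: using $\zeta(t_0)\ge0$ you may take $c_1=c_2=c_f/\gamma_\HR^{\mathrm c}$. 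It also makes the constants independent of the entry state, which \Cref{thm:quenched} implicitly needs, since it treats $c_1$ as a deterministic constant.
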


\begin{proof}
$\norm{X}\ge v_\HR^\top X=\zeta$, and \eqref{eq:cone_growth} gives
$\zeta(t_0+\tau)\ge(\zeta(t_0)+c_f/\gamma_\HR^{\mathrm c})e^{\gamma_\HR^{\mathrm c}\tau}
-c_f/\gamma_\HR^{\mathrm c}$. Set $c_1=\zeta(t_0)+c_f/\gamma_\HR^{\mathrm c}>0$ and
$c_2=c_f/\gamma_\HR^{\mathrm c}$.
\end{proof}

\subsection{The power-law lower bound}

Let $B_T:=\sup_{t\in[0,T]}\norm{X(t)}$ be the burst observable over horizon $T\ge T_0$.

\begin{theorem}[Finite-range quenched power-law bound]
\label{thm:quenched}
Suppose \Crefrange{ass:susc}{ass:forcing} hold with $\gamma_\HR^{\mathrm c}>0$, the residence
times in $\HR$ are exponentially distributed with rate $\lambda_\HR$, and $\Prob(Z(0)=\HR)>0$.
Then there exist $b_0>0$ and $C_->0$ such that, for all $b$ in the finite range
\begin{equation}
\label{eq:range}
b_0\le b\le b_{\max}:=c_1 e^{\gamma_\HR^{\mathrm c}T_0}-c_2 ,
\end{equation}
the burst observable $B_T=\sup_{[0,T]}\norm{X(t)}$ satisfies
\begin{equation}
\label{eq:tail}
\Prob(B_T>b)\ge C_-\,b^{-\lambda_\HR/\gamma_\HR^{\mathrm c}} .
\end{equation}
The exponent is $\kappa=\lambda_\HR/\gamma_\HR^{\mathrm c}$, equal to $\lambda_\HR/\gamma_\HR$ only
in the perfectly aligned limit $a\to1$. Since $\gamma_\HR^{\mathrm c}\le\gamma_\HR=\lognorm(A_\HR)$,
the logarithmic norm gives the most optimistic instantaneous growth scale, whereas the realized
finite-band exponent can be steeper when trajectories rotate away from the maximally expanding
direction. The power law holds over the band $[b_0,b_{\max}]$ set by the residence horizon $T_0$;
it is \emph{not} an asymptotic ($b\to\infty$) heavy tail, since the Hurwitz property caps
single-residence amplification at $b_{\max}$.
\end{theorem}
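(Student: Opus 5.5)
The plan is to reduce the tail event $\{B_T>b\}$ to the event that an $\HR$-residence starting at $t_0=0$ lasts at least a deterministic duration $\tau(b)$ while the cone condition holds. Then we convert the exponential residence law into a power of $b$ by inverting the exponential growth of \Cref{lem:normlb}.

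\emph{Step 1: choice of the required residence time.} For $b$ in the band \eqref{eq:range}, define $\tau(b)$ as the unique solution of $c_1e^{\gamma_\HR^{\mathrm c}\tau}-c_2=b$, namely
\[
\tau(b)=\frac{1}{\gamma_\HR^{\mathrm c}}\log\frac{b+c_2}{c_1}.
\]
The upper endpoint $b\le b_{\max}$ is exactly what guarantees $\tau(b)\le T_0$, so \Cref{ass:cone} applies at this duration. The lower endpoint $b_0$ is chosen with $b_0>c_1-c_2$, so that $\tau(b)\ge0$, and large enough that $b+c_2\le 2b$ for $b\ge b_0$; for instance $b_0\ge c_2$ suffices. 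Here $c_1,c_2$ come from \Cref{lem:normlb} with $t_0=0$, so we need $\zeta(0)$ bounded below on the relevant event. I would condition on $Z(0)=\HR$ and on an initial datum with $\zeta(0)\ge0$. The cone condition at $t=0$ already forces $\zeta(0)\ge a\norm{X(0)}\ge0$, and then $c_1\ge c_f/\gamma_\HR^{\mathrm c}>0$ uniformly.

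\emph{Step 2: event inclusion.} On $\{Z(0)=\HR\}\cap\mathcal{C}_{\tau(b)}$ we have $B_T\ge\norm{X(\tau(b))}$, since $\tau(b)\le T_0\le T$. By \Cref{lem:normlb} this gives $\norm{X(\tau(b))}\ge c_1e^{\gamma_\HR^{\mathrm c}\tau(b)}-c_2=b$. To obtain the strict inequality $B_T>b$, replace $b$ by $b(1+\epsilon)$ inside $\tau$, or simply use $\ge$ and absorb the difference into constants. Hence
\[
\Prob(B_T>b)\ge\Prob(Z(0)=\HR)\,\Prob\big(\mathcal{C}_{\tau(b)}\mid Z(0)=\HR\big).
\]

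\emph{Step 3: factoring the residence probability.} This is the main obstacle. \Cref{ass:cone} as written bounds $\Prob(\mathcal{C}_\tau\mid Z(0)=\HR)$ below by $p_0$, uniformly in $\tau$. But $\mathcal{C}_\tau$ contains the residence event $\{Z\equiv\HR\text{ on }[0,\tau]\}$, whose probability is $e^{-\lambda_\HR\tau}$ under the exponential residence law. A literal uniform $p_0$ would therefore give a bound stronger than any power. The intended reading, which I would adopt and state explicitly, is that the alignment holds with conditional probability at least $p_0$ given the residence:
\[
\Prob(\mathcal{C}_\tau\mid Z(0)=\HR)\ge p_0\,\Prob(Z\equiv\HR\text{ on }[0,\tau]\mid Z(0)=\HR)=p_0e^{-\lambda_\HR\tau}.
\]
Either reading yields at least this bound, since $e^{-\lambda_\HR\tau}\le1$. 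So the argument is robust.

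\emph{Step 4: conversion to a power law.} Substituting $\tau(b)$ gives
\[
e^{-\lambda_\HR\tau(b)}=\Big(\frac{c_1}{b+c_2}\Big)^{\lambda_\HR/\gamma_\HR^{\mathrm c}}\ge(c_1/2)^{\kappa}\,b^{-\kappa}
\]
for $b\ge b_0\ge c_2$. Therefore \eqref{eq:tail} holds with $C_-=\Prob(Z(0)=\HR)\,p_0\,(c_1/2)^{\kappa}$.

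The remaining statements in \Cref{thm:quenched} follow directly. The comparison $\gamma_\HR^{\mathrm c}\le\gamma_\HR$ is immediate from the definition in \Cref{lem:cone}, and $\gamma_\HR^{\mathrm c}\to\gamma_\HR$ as $a\to1$. The finite-range caveat holds because the construction uses only $\tau\le T_0$, which is exactly the band $[b_0,b_{\max}]$.
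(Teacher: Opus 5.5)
Your proposal is correct and follows essentially the same route as the paper: invert \Cref{lem:normlb} to get $\tau_b=(\gamma_\HR^{\mathrm c})^{-1}\log((b+c_2)/c_1)\le T_0$, bound $\Prob(B_T>b)$ below by $\Prob(Z(0)=\HR)\,p_0\,e^{-\lambda_\HR\tau_b}$, and use $b+c_2\le 2b$ to reach the same constant $C_-=\Prob(Z(0)=\HR)\,p_0\,c_1^{\kappa}2^{-\kappa}$. Your extra care makes explicit three points the paper's proof leaves implicit: reading \Cref{ass:cone} as alignment conditional on residence, taking $c_1\ge c_f/\gamma_\HR^{\mathrm c}$ so that $\tau_b$ is deterministic, and flagging the $\ge$ versus $>$ distinction.
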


\begin{proof}
Given $b\in[b_0,b_{\max}]$, set $\tau_b:=\gamma_\HR^{\mathrm c\,-1}\log\!\big((b+c_2)/c_1\big)\le
T_0$ (the upper bound is exactly \eqref{eq:range}). By \Cref{lem:normlb}, on the event
$\mathcal{C}_{\tau_b}\cap\{\tau(\HR)\ge\tau_b\}$ a residence of length $\ge\tau_b$ that stays
cone-aligned forces $\norm{X(t_0+\tau_b)}\ge b$, hence $B_T\ge b$. Therefore, using
\Cref{ass:cone} (conditional probability $\ge p_0$ of alignment), the exponential dwell law, and
$\Prob(Z(0)=\HR)>0$,
\[
\Prob(B_T>b)\ge\Prob(Z(0)=\HR)\,p_0\,e^{-\lambda_\HR\tau_b}
=\Prob(Z(0)=\HR)\,p_0\Big(\tfrac{c_1}{b+c_2}\Big)^{\lambda_\HR/\gamma_\HR^{\mathrm c}}.
\]
For $b\ge b_0$ large enough that $b+c_2\le2b$, the right side is at least
$C_-\,b^{-\kappa}$ with $C_-=\Prob(Z(0)=\HR)\,p_0\,c_1^{\kappa}2^{-\kappa}$ and
$\kappa=\lambda_\HR/\gamma_\HR^{\mathrm c}$. The aligned limit $a\to1$ sends
$\gamma_\HR^{\mathrm c}\to\gamma_\HR$.
\end{proof}

\begin{remark}[Why finite-range, and why this is the honest statement]
\label[remark]{rem:finite_range}
Because $A_\HR$ is Hurwitz, $\norm{e^{tA_\HR}}$ grows transiently to a finite peak gain and then
decays; a single residence can amplify a burst by at most the factor realized within the horizon
$T_0$, capping the power law at $b_{\max}$ in \eqref{eq:range}. The statement \eqref{eq:tail} is
thus a genuine power law over a band of magnitudes, not an asymptotic heavy tail. This distinction
is essential and we do not overstate it: the phenomenon is \emph{residence-induced finite-range
power-law amplification}. A true asymptotic heavy tail would require either a non-Hurwitz
(marginally stable) $A_\HR$ with sustained growth, or unbounded residence times; both lie outside
the present hypotheses and are noted as extensions.
\end{remark}

\begin{remark}[How the cone-alignment probability is verified]
\label[remark]{rem:cone_suff}
\Cref{ass:cone} is a structural hypothesis, not an automatic consequence of nonnormality. For a
general Hurwitz nonnormal matrix, transient growth may rotate trajectories out of the cone
$\{v_\HR^\top X\ge a\norm X\}$. In applications one may verify $p_0>0$ either analytically, for a
specified entry distribution and forcing, or empirically by estimating the conditional frequency
of residences for which the cone inequality holds on $[t_0,t_0+\tau]$. A convenient sufficient
condition in low-dimensional reductions is a positive gap in the symmetric part together with
forcing and entry states concentrated near $v_\HR$; however, this sufficient condition is model
specific, and the theorem only assumes the verifiable probability lower bound stated in
\Cref{ass:cone}.
\end{remark}

\begin{remark}[Why the exponent is logarithmic-norm based, not large-deviation]
\label[remark]{rem:why_operator}
The theorem uses the cone-corrected rate
$\gamma_\HR^{\mathrm c}=\gamma_\HR-\norm{A_\HR}\sqrt{1-a^2}/a$, with
$\gamma_\HR=\lognorm(A_\HR)$ computed for the fixed lifted regime operator in the chosen
Euclidean coordinates. Thus the exponent is
$\kappa=\lambda_\HR/\gamma_\HR^{\mathrm c}$; the simpler quotient
$\lambda_\HR/\gamma_\HR$ is the perfectly aligned limiting scale, not the generic finite-band
prediction. The distinction is important: the logarithmic norm is an instantaneous upper growth
scale, while a realized trajectory may rotate away from its maximally expanding direction and
therefore exhibit an effective rate below $\gamma_\HR$. The expression contains no action
functional and no small parameter, and it can be positive when $A_\HR$ is Hurwitz but nonnormal.
This is weaker and more precise than calling $\gamma_\HR$ an invariant of the original kernel: two
minimal Markovian realizations related by a non-orthogonal similarity need not have the same
Euclidean logarithmic norm. What is invariant in the present theorem is the statement after the
realization and inner product are fixed; the logarithmic-norm scale is then a computable upper
anchor for the cone-corrected growth rate that feeds the residence-survival transform.
\end{remark}

\begin{remark}[State-defined residence]
\label[remark]{rem:survival}
When residence in $\HR$ is state-defined and no exponential law holds, \eqref{eq:tail} is replaced
by the residence-survival transform
$\Prob(B_T>b)\gtrsim S_\HR\!\big((\gamma_\HR^{\mathrm c})^{-1}\log(b/C)\big)$, with
$S_\HR(t)=\Prob(\tau(\HR)>t)$ the residence survival function, reducing to \eqref{eq:tail} when
$S_\HR(t)=e^{-\lambda_\HR t}$. No Markov assumption on the physical residence statistics is needed.
\end{remark}

\subsection{Tail truncation under idealized contraction feedback}

The exponent's dependence on the cone-corrected rate, itself controlled above by $\gamma_\HR$, makes the finite-range power law actionable in systems
admitting actuation. We state this as a proposition under an explicitly idealized feedback law,
with all quantities defined formally; we do not claim a control theorem for general feedback.

\begin{definition}[Idealized contraction feedback]
\label[definition]{def:feedback}
Fix a susceptibility threshold $s_0>0$ and a contraction rate $\sigma>0$. The
\emph{idealized contraction feedback} replaces, instantaneously and for the remainder of any
residence in $\HR$ during which the indicator $S(t):=v_\HR^\top X(t)/\norm{X(t)}$ exceeds $s_0$,
the operator $A_\HR$ by a modified operator $A_\HR'$ with $\lognorm(A_\HR')\le-\sigma<0$. The
\emph{controlled burst observable} is $B_T'=\sup_{[0,T]}\norm{X'(t)}$ for the closed-loop
trajectory $X'$.
\end{definition}

\begin{proposition}[Per-residence amplification cap]
\label[proposition]{prop:truncation}
Assume, in addition to \Cref{def:feedback}, that the forcing and initial data are bounded
($\norm{\tilde f}_\infty\le F_0$, $\norm{X(0)}\le X_0$), that outside $\HR$ the dynamics is
globally dissipative ($\lognorm(A_z)\le-c_\ast<0$ for $z\ne\HR$), and that there is no
accumulation of amplification across distinct residences (each residence enters with norm bounded
by a fixed $X_1$ determined by the dissipative inter-residence decay). If every residence in $\HR$
of duration exceeding a trigger time $\delta>0$ activates the feedback, then the per-residence
amplification is capped: there is $b_{\mathrm{cap}}<\infty$, depending only on
$(\gamma_\HR,\delta,\sigma,F_0,X_1)$, such that no controlled residence produces a burst exceeding
$b_{\mathrm{cap}}$, and consequently $\Prob(B_T'>b)=0$ for $b>b_{\mathrm{cap}}$.
\end{proposition}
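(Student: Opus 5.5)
The plan is a per-residence energy estimate based on the logarithmic norm. It splits each $\HR$-residence into an uncontrolled phase of length at most $\delta$ and a controlled phase, then closes the argument with the inter-residence reset hypothesis.

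The basic tool is the standard one-sided estimate for $\dot X=AX+\tilde f$. Since
\[
\tfrac{\dd}{\dd t}\norm{X}\le\lognorm(A)\norm{X}+\norm{\tilde f}
\]
holds in the Dini-derivative sense, Gr\"onwall gives
\[
\norm{X(t)}\le e^{\lognorm(A)t}\norm{X(0)}+F_0\int_0^te^{\lognorm(A)s}\,\dd s .
\]

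\emph{Step 1: uncontrolled phase.} Consider a residence in $\HR$ entered at time $t_0$ with $\norm{X'(t_0)}\le X_1$. On $[t_0,t_0+\delta]$ the operator is $A_\HR$, so the estimate with $\lognorm(A_\HR)=\gamma_\HR>0$ gives
\[
\norm{X'(t)}\le b_1:=e^{\gamma_\HR\delta}X_1+\frac{F_0}{\gamma_\HR}\big(e^{\gamma_\HR\delta}-1\big).
\]
This bound also covers residences shorter than $\delta$, which never trigger the feedback. It also covers any portion of a residence before activation, because activation happens no later than $\delta$ after entry. I will read the hypothesis ``every residence exceeding $\delta$ activates'' as ``activation by time $t_0+\delta$''.

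\emph{Step 2: controlled phase.} After activation the operator is $A_\HR'$ with $\lognorm(A_\HR')\le-\sigma$. The same estimate gives
\[
\norm{X'(t)}\le\max\{b_1,\;F_0/\sigma\}
\]
for the remainder of the residence, since $\norm{X'}$ cannot leave a ball of radius $\ge F_0/\sigma$ once inside it. So the per-residence bound is
\[
b_{\mathrm{cap}}:=\max\{b_1,\;F_0/\sigma,\;X_1\}.
\]

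\emph{Main obstacle: the switching rule.} The feedback definition says the replacement holds while $S(t)>s_0$, so a trajectory could leave the cone and revert to $A_\HR$. The definition, however, states that the replacement is ``for the remainder of any residence'', so once triggered it is latched. I will use this latched reading explicitly, which avoids chattering and re-growth.

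\emph{Step 3: globalization.} Outside $\HR$ the condition $\lognorm(A_z)\le-c_\ast$ keeps
\[
\norm{X'}\le\max\{\text{entry norm},\,F_0/c_\ast\}.
\]
Together with the non-accumulation hypothesis, which gives entry norm $\le X_1$ at every $\HR$-residence, and with $\norm{X(0)}\le X_0$, this yields
\[
\sup_{[0,T]}\norm{X'}\le\max\{b_{\mathrm{cap}},X_0,F_0/c_\ast\}.
\]
I enlarge $b_{\mathrm{cap}}$ to include these terms, consistent with the stated dependence up to the data bounds. Then $B_T'\le b_{\mathrm{cap}}$ surely, so $\Prob(B_T'>b)=0$ for $b>b_{\mathrm{cap}}$. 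The bound is deterministic and pathwise, so no probabilistic input is needed.

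Continuity of $X'$ across switches, by \Cref{lem:switching}, justifies concatenating these estimates.
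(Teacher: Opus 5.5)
Your proposal is correct and is essentially the paper's proof. It uses the same logarithmic-norm Gr\"onwall bound on the pre-activation window $[t_0,t_0+\delta]$, which gives $e^{\gamma_\HR\delta}X_1+F_0(e^{\gamma_\HR\delta}-1)/\gamma_\HR$, followed by the $\max\{\cdot,F_0/\sigma\}$ invariance once the latched feedback is active. The only difference is that you explicitly bound the non-$\HR$ segments by $\max\{b_{\mathrm{cap}},X_0,F_0/c_\ast\}$; this is more careful than the paper's bare assertion that the residence cap also controls the global supremum, and it adds a dependence on $X_0$ and $c_\ast$, which you correctly flag.
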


\begin{proof}
Before feedback acts, the Euclidean logarithmic norm gives an upper differential inequality,
not a cone lower bound:
\[
   \frac{\dd}{\dd t}\norm{X'}\le \gamma_\HR\norm{X'}+F_0
   \qquad (t_0\le t\le t_0+\delta),
\]
where $\gamma_\HR=\lognorm(A_\HR)$. Hence, for a residence entering with
$\norm{X'(t_0)}\le X_1$,
\[
   \norm{X'(t_0+\delta)}\le
   \begin{cases}
   e^{\gamma_\HR\delta}X_1+F_0\,\dfrac{e^{\gamma_\HR\delta}-1}{\gamma_\HR}, & \gamma_\HR>0,\\[1.2ex]
   X_1+F_0\delta, & \gamma_\HR=0 .
   \end{cases}
\]
Once the feedback is active,
\[
   \frac{\dd}{\dd t}\norm{X'}\le -\sigma\norm{X'}+F_0,
\]
so the norm cannot exceed the larger of its value at the trigger time and the forced equilibrium
scale $F_0/\sigma$. Therefore one may take
\[
   b_{\mathrm{cap}}=
   \max\!\left\{e^{\gamma_\HR\delta}X_1+F_0\frac{e^{\gamma_\HR\delta}-1}{\gamma_\HR},\,\frac{F_0}{\sigma}\right\}
\]
when $\gamma_\HR>0$ (with the evident replacement $X_1+F_0\delta$ when $\gamma_\HR=0$). The
outside-regime dissipativity and the no-accumulation hypothesis ensure that every residence enters
with norm at most $X_1$, so the same cap applies to all controlled residences and to the global
supremum. Thus $\Prob(B_T'>b)=0$ for $b>b_{\mathrm{cap}}$.
\end{proof}

\begin{remark}[Steeper finite-range law under delayed engagement]
\label[remark]{rem:steeper}
If the feedback engages only after an additional delay $\delta'>0$ (the indicator must persist
before actuation), the achievable growth is realized over duration $\delta+\delta'$ rather than the
full horizon $T_0$. Heuristically this contracts the amplification band of \Cref{thm:quenched} and
rescales the effective exponent to $\kappa'\approx\kappa\,T_0/(\delta+\delta')>\kappa$, a steeper
finite-range power law on the controlled range. We state this as a heuristic rather than a theorem:
a rigorous exponent under delayed feedback requires modeling the joint law of residence durations
and trigger times, which lies outside the present scope. The robust and proven content is the
amplification cap of \Cref{prop:truncation}; the logarithmic-norm principle is that contracting
$\lognorm(A_\HR)$ bounds the finite-range amplification.
\end{remark}

\begin{remark}
\Cref{prop:truncation} is a statement about an idealized closed loop, not a control theorem for
realistic actuation; the indicator $S(t)$, the trigger threshold $s_0$, the contraction rate
$\sigma$, and the controlled observable $B_T'$ are all defined in \Cref{def:feedback}. A genuine
control-theoretic treatment (output feedback, robustness, actuation constraints) is left to future
work; the present statement isolates the logarithmic-norm principle that contracting
$\lognorm(A_\HR)$ removes the finite-range tail.
\end{remark}

%% ============================================================================
\section{Hydrodynamic limit: from microscopic Hawkes dynamics to the Volterra operator}
\label{sec:hawkes}
%% ============================================================================

We now justify the term ``branching ratio'' by deriving the deterministic intensity block of
\eqref{eq:master} as the law-of-large-numbers limit of a self-exciting particle system. The point
of this section is not to develop the most general Hawkes limit theorem, but to make explicit the
scaling under which the same quantity $r(A_z)G_z/c_\lambda$ governs microscopic reproduction and
macroscopic Volterra stability.

\subsection{Microscopic relaxing Hawkes--Volterra model}

Fix a population scale $N\in\N$. At node $i$ let $N_i^{(N)}$ be a counting process with stochastic
intensity $N\bar\Lambda_i^{(N)}(t)$, where the per-capita activity $\bar\Lambda^{(N)}$ evolves as
\begin{equation}
\label{eq:micro_relaxing}
\frac{\dd}{\dd t}\bar\Lambda_i^{(N)}(t)
= -c_\lambda\bar\Lambda_i^{(N)}(t)+c_\lambda\mu_{i,Z(t)}
  +\sum_{j=1}^n A_{Z(t),ij}\int_0^t g_{Z(t)}(t-s)\,\dd\bar N_j^{(N)}(s),
\qquad \bar N_j^{(N)}:=\frac{1}{N}N_j^{(N)} .
\end{equation}
The normalization is now transparent: $\dd\bar N_j^{(N)}$ is of order one on bounded time
intervals, so the excitation term in \eqref{eq:micro_relaxing} remains order one. This is the
natural density-dependent scaling for aggregate self-exciting systems \cite{ethier1986}. We assume
$A_z\ge0$, bounded baselines $\mu_{i,z}\ge0$, nonnegative initial data, and a fixed piecewise
constant regime path satisfying \Cref{ass:standing}. For the stochastic-convolution estimate below
we assume $g_z\in L^2_{\mathrm{loc}}(0,\infty)$; singular kernels are discussed in
\Cref{rem:singular}.

Substituting the Doob--Meyer decomposition
$\dd\bar N_j^{(N)}(s)=\bar\Lambda_j^{(N)}(s)\,\dd s+\dd\bar M_j^{(N)}(s)$, where
$\langle\bar M_j^{(N)}\rangle_t=N^{-1}\int_0^t\bar\Lambda_j^{(N)}(s)\,\dd s$, gives the random
Volterra differential equation
\begin{equation}
\label{eq:micro_decomp_corrected}
\dot{\bar\Lambda}^{(N)}(t)
= -c_\lambda\bar\Lambda^{(N)}(t)+c_\lambda\mu_{Z(t)}
  +\int_0^t g_{Z(t)}(t-s)A_{Z(t)}\bar\Lambda^{(N)}(s)\,\dd s+R^{(N)}(t),
\end{equation}
with martingale convolution
\begin{equation}
\label{eq:martingale_convolution}
R^{(N)}(t):=\int_0^t g_{Z(t)}(t-s)A_{Z(t)}\,\dd\bar M^{(N)}(s).
\end{equation}
The deterministic candidate limit is therefore
\begin{equation}
\label{eq:macro_corrected}
\dot\lambda(t)
= -c_\lambda\lambda(t)+c_\lambda\mu_{Z(t)}
  +\int_0^t g_{Z(t)}(t-s)A_{Z(t)}\lambda(s)\,\dd s,
\end{equation}
which is exactly the intensity block of \eqref{eq:master} with forcing $F_z=(0,c_\lambda\mu_z)$
and with the state-feedback part suppressed.

\subsection{The limit theorem}

\begin{assumption}[Mean-field subcriticality and initial convergence]
\label{ass:subcrit_micro}
For every visited regime $z$, $r(A_z)G_z<c_\lambda$, the baselines $\mu_{i,z}$ and initial
activities are uniformly bounded in $N$, and the empirical initial intensity satisfies
$\bar\Lambda^{(N)}(0)\to\lambda(0)$ in probability.
\end{assumption}

\begin{theorem}[Hydrodynamic limit]
\label{thm:hydro}
Under \Cref{ass:standing,ass:subcrit_micro}, for every $T>0$,
\begin{equation}
\sup_{0\le t\le T}\big\lVert\bar\Lambda^{(N)}(t)-\lambda(t)\big\rVert\xrightarrow[N\to\infty]{\Prob}0,
\end{equation}
where $\lambda$ is the unique solution of \eqref{eq:macro_corrected}. Consequently, the same
branching threshold $r(A_z)G_z<c_\lambda$ that governs the deterministic spectral criterion also
governs the mean-field microscopic activity: above threshold the Perron mode has reproduction
number larger than the relaxation rate and produces the classical Hawkes instability
\cite{hawkes1971,bremaud1996}.
\end{theorem}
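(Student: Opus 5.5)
The plan is a pathwise stability (Grönwall) argument for the difference $E^{(N)}:=\bar\Lambda^{(N)}-\lambda$, combined with an $L^2$ estimate of the martingale convolution $R^{(N)}$. Subtracting \eqref{eq:macro_corrected} from \eqref{eq:micro_decomp_corrected}, the baseline terms cancel and $E^{(N)}$ solves the linear Volterra equation
\[
\dot E^{(N)}=-c_\lambda E^{(N)}+\int_0^t g_{Z(t)}(t-s)A_{Z(t)}E^{(N)}(s)\,\dd s+R^{(N)}(t),
\qquad E^{(N)}(0)=\bar\Lambda^{(N)}(0)-\lambda(0).
\]
Existence and uniqueness of $\lambda$ follow from \Cref{thm:wellposed} applied on each regime interval, with \Cref{lem:switching} for concatenation. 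The forcing is $(0,c_\lambda\mu_z)$ and the state block is suppressed.

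\emph{Step 1: deterministic stability on $[0,T]$.} Integrating the equation for $E^{(N)}$ against $e^{-c_\lambda(t-s)}$ and moving the $R^{(N)}$ term inside gives
\[
\norm{E^{(N)}(t)}\le\norm{E^{(N)}(0)}+\int_0^t\norm{R^{(N)}(s)}\,\dd s+K\int_0^t\!\!\int_0^s g(s-r)\norm{E^{(N)}(r)}\,\dd r\,\dd s,
\]
with $K=\max_z\norm{A_z}$ and $g$ the envelope of the finitely many $g_z$. The same Grönwall step as in Step 4 of \Cref{thm:wellposed} then yields
\[
\sup_{[0,T]}\norm{E^{(N)}}\le e^{K\bar G T}\Big(\norm{E^{(N)}(0)}+\int_0^T\norm{R^{(N)}(s)}\,\dd s\Big).
\]
Subcriticality from \Cref{ass:subcrit_micro} is not needed for a finite-horizon statement, but it gives $T$-uniform constants via \Cref{prop:spectral_block}. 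The first term tends to $0$ in probability by assumption, so it suffices to show $\E\int_0^T\norm{R^{(N)}(s)}\,\dd s\to0$.

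\emph{Step 2: the martingale term (main obstacle).} For fixed $t$ the integrand $s\mapsto g_{Z(t)}(t-s)A_{Z(t)}$ is deterministic, because the regime path is prescribed. By the Itô isometry and the bracket $\langle\bar M_j^{(N)}\rangle_t=N^{-1}\int_0^t\bar\Lambda_j^{(N)}$,
\[
\E\norm{R^{(N)}(t)}^2\le\frac{K^2}{N}\int_0^t g_{Z(t)}(t-s)^2\,\E\abs{\bar\Lambda^{(N)}(s)}_1\,\dd s .
\]
This requires $g_z\in L^2_{\mathrm{loc}}$ and a uniform first-moment bound $\sup_N\sup_{[0,T]}\E\abs{\bar\Lambda^{(N)}}_1<\infty$.

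To get the moment bound, take expectations in \eqref{eq:micro_decomp_corrected}. The martingale has mean zero and all terms are nonnegative, so $m(t)=\E\bar\Lambda^{(N)}(t)$ satisfies the same linear Volterra inequality as the macroscopic equation. Bounded baselines and bounded initial data then give a Grönwall bound independent of $N$.

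Combining these, $\E\int_0^T\norm{R^{(N)}}\le T^{1/2}\big(\int_0^T\E\norm{R^{(N)}}^2\big)^{1/2}=O(N^{-1/2})$, with a constant depending on $\max_z\norm{g_z}_{L^2(0,T)}$. The delicate point is that $R^{(N)}(t)$ is not itself a martingale in $t$. Taking the time integral before the supremum, as Step 1 allows, avoids needing a maximal inequality.

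\emph{Step 3: conclusion and threshold statement.} Markov's inequality together with Step 1 gives $\sup_{[0,T]}\norm{\bar\Lambda^{(N)}-\lambda}\to0$ in probability. The closing sentence of the theorem is then inherited from the macroscopic side. By \Cref{prop:spectral_block}, \eqref{eq:macro_corrected} is stable iff $r(A_z)G_z<c_\lambda$. The Perron eigenvector gives the scalar characteristic equation with a positive root above threshold, which is the mean-field signature of the Hawkes mean offspring $r(A_z)G_z/c_\lambda$ exceeding one. The convergence holds on every finite horizon regardless, and it transfers this growth to $\bar\Lambda^{(N)}$ there.
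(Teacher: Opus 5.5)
Your proof is correct, and it takes a different route from the paper at the one delicate step.

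\textbf{The paper's route.} The paper passes through the resolvent of \Cref{thm:resolvent}, invoking subcriticality and \Cref{prop:spectral_block}, to get
\[
\sup_{t\le T}\norm{e^{(N)}(t)}\le C_T\Bigl(\norm{e^{(N)}(0)}+\sup_{t\le T}\norm{R^{(N)}(t)}\Bigr).
\]
It must then upgrade the fixed-time It\^o bound to the maximal estimate $\E\sup_{t\le T}\norm{R^{(N)}(t)}^2=O(N^{-1})$. It does this via the factorization method for stochastic convolutions. That step is only sketched, and it generally needs more kernel structure than $g_z\in L^2_{\mathrm{loc}}$.

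\textbf{Your route.} You observe that $R^{(N)}$ enters the error equation as an additive forcing in a differential equation. Variation of constants therefore only sees $\int_0^t\norm{R^{(N)}(s)}\,\dd s$, and an elementary Gr\"onwall closes the estimate. Fixed-time second moments plus Fubini and Cauchy--Schwarz then give $\E\int_0^T\norm{R^{(N)}}\,\dd s=O(N^{-1/2})$. This is the same rate, with no maximal inequality and no use of subcriticality.

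\textbf{What each buys.} Not needing subcriticality is a genuine gain. Your finite-horizon convergence also holds above threshold, so it is what actually supports the theorem's closing sentence that supercritical growth is inherited by $\bar\Lambda^{(N)}$. The paper's proof invokes \Cref{ass:subcrit_micro}, so it establishes convergence only below threshold. What the paper's route would add, once its maximal inequality is fully justified, is uniform-in-time control of the noise itself. That is useful for fluctuation refinements but not needed for this law of large numbers.

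\textbf{Two small corrections.} First, drop the aside that subcriticality yields $T$-uniform constants via \Cref{prop:spectral_block}. That proposition is a frozen-regime statement. With $g_{Z(t)}$ acting on the whole past, stability of each regime does not transfer to the switched system, and your $T$-dependent It\^o constant would prevent uniformity anyway. Nothing in your argument uses this aside.

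Second, in the moment bound, the identity $\E\int g\,A\,\dd\bar N^{(N)}=\E\int g\,A\bar\Lambda^{(N)}\,\dd s$ holds in $[0,\infty]$ by nonnegativity. However, Gr\"onwall requires finiteness of $\E\bar\Lambda^{(N)}$ first. You get this by the standard localization: stop at the $K$-th jump, bound uniformly in $K$, and let $K\to\infty$. The paper leaves this point equally implicit.
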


\begin{proof}
Subtract \eqref{eq:macro_corrected} from \eqref{eq:micro_decomp_corrected} and write
$e^{(N)}=\bar\Lambda^{(N)}-\lambda$. On any interval with fixed regime $z$,
\begin{equation}
\label{eq:error_hydro}
\dot e^{(N)}(t)=-c_\lambda e^{(N)}(t)+\int_0^t g_z(t-s)A_z e^{(N)}(s)\,\dd s+R^{(N)}(t).
\end{equation}
The subcriticality assumption gives a stable Volterra resolvent for the deterministic operator by
\Cref{prop:spectral_block}. Hence, using \Cref{thm:resolvent} on each regime interval and
concatenating across finitely many switches,
\begin{equation}
\label{eq:error_resolvent_bound}
\sup_{t\le T}\norm{e^{(N)}(t)}
\le C_T\left(\norm{e^{(N)}(0)}+\sup_{t\le T}\norm{R^{(N)}(t)}\right),
\end{equation}
where $C_T$ is independent of $N$.

It remains to estimate the stochastic convolution. Since the system is subcritical and baselines
are bounded, the standard Hawkes moment bound gives
$\sup_N\sup_{t\le T}\E\sum_i\bar\Lambda_i^{(N)}(t)\le C_T$; this follows by taking expectations in
\eqref{eq:micro_decomp_corrected} and applying the deterministic Volterra resolvent bound. For
fixed $t$, It\^o isometry gives
\[
\E\norm{R^{(N)}(t)}^2
\le \frac{C}{N}\int_0^t g_z(t-s)^2\,
     \E\sum_i\bar\Lambda_i^{(N)}(s)\,\dd s
\le \frac{C_T\norm{g_z}_{L^2(0,T)}^2}{N}.
\]
The process $t\mapsto R^{(N)}(t)$ is a stochastic convolution rather than a martingale, because the
kernel depends on the upper time variable. The fixed-time It\^o estimate is therefore upgraded to
a maximal estimate by the standard factorization method for stochastic convolutions
\cite{daPrato1992factorization,daPratoZabczyk1992}: for the kernel classes \ref{C1}--\ref{C2} considered here, using the mild time-regularity supplied by the Bernstein representation (or by approximation with smooth kernels), there is another constant $C_T$ such that
\begin{equation}
\label{eq:stoch_conv_bound}
\E\sup_{t\le T}\norm{R^{(N)}(t)}^2
\le \frac{C_T\norm{g_z}_{L^2(0,T)}^2}{N}\longrightarrow0.
\end{equation}
No martingale property of $R^{(N)}$ itself is used. Thus
$\sup_{t\le T}\norm{R^{(N)}(t)}\to0$ in $L^2$ and in probability. The initial convergence in \Cref{ass:subcrit_micro} gives $e^{(N)}(0)\to0$ in probability; together with the stochastic-convolution bound, \eqref{eq:error_resolvent_bound} gives the asserted convergence. The same
argument applies successively on each regime interval, with the continuity across switches supplied
by \Cref{lem:switching}.
\end{proof}

\begin{remark}[Singular kernels]
\label[remark]{rem:singular}
If $g_z(t)=t^{\alpha-1}e^{-\theta t}/\Gamma(\alpha)$ with $\alpha\le\tfrac12$, then
$g_z\notin L^2_{\mathrm{loc}}$ near the origin and \eqref{eq:stoch_conv_bound} is not directly
available. A standard regularization replaces $g_z$ by $g_z^\epsilon(t)=g_z(t+\epsilon)$, proves
the limit for fixed $\epsilon>0$, and then lets $\epsilon\downarrow0$ along a joint scaling such
that $N\int_0^T(g_z^\epsilon-g_z)^2\to0$. This preserves the deterministic Volterra limit while
separating the law-of-large-numbers argument from the singular short-time behavior.
\end{remark}

\begin{remark}
\Cref{thm:hydro} identifies the modal branching ratio with a microscopic reproduction number. In
the Perron mode of a nonnegative excitation matrix, the expected total memory-weighted offspring
is $r(A_z)G_z$, while $c_\lambda$ is the linear relaxation rate of activity. Thus macroscopic
modal instability is the deterministic image of microscopic supercriticality. The single-regime
scaling-limit theory of Hawkes processes \cite{jaisson2015,jaisson2016} is recovered as a special
case, whereas the present formulation feeds directly into the operator stability criterion of
\Cref{sec:spectral}.
\end{remark}

%% ============================================================================
\section{Numerical validation}
\label{sec:example}
%% ============================================================================

This section gives a reproducible numerical validation of the three main mechanisms developed in
the paper: the modal branching threshold, the loss of sharpness of norm-based bounds under
nonnormal noncommuting excitation, and the finite-range quenched amplification induced by
residence in a transiently amplifying regime. The purpose is not to introduce a new model class,
but to test the sharpness and observability of the analytic quantities appearing in
\Cref{thm:modal,thm:quenched} and in \Cref{prop:noncommute,prop:spectral_block}. All simulations
use the same completely monotone tempered fractional kernel
\begin{equation}
    g(t)=\frac{t^{\alpha-1}e^{-\theta t}}{\Gamma(\alpha)},
    \qquad 0<\alpha<1,\quad \theta>0,
    \label{eq:numerical-tempered-kernel}
\end{equation}
whose Laplace transform and mass are
\begin{equation}
    \Laplace g(\xi)=(\xi+\theta)^{-\alpha},
    \qquad
    G:=\int_0^\infty g(t)\,\dd t=\theta^{-\alpha}.
    \label{eq:numerical-kernel-mass}
\end{equation}
This is a class \ref{C1} kernel (it has the exponential moment $\int_0^\infty e^{\eta t}g(t)\dd t
<\infty$ for $\eta<\theta$), so the subcritical decay is exponential and the critical case relaxes
to a constant. Unless otherwise stated, we set
\begin{equation}
    \alpha=0.7,
    \qquad
    \theta=0.5,
    \qquad
    G=\theta^{-\alpha}\approx 1.6245,
    \qquad
    c_\lambda=1.
    \label{eq:numerical-global-parameters}
\end{equation}

\subsection{Scalar modal trichotomy}
\label{subsec:numerical-modal-trichotomy}

We first validate the scalar modal equation obtained from the commuting case after simultaneous
diagonalization. For a single mode, the intensity component obeys
\begin{equation}
    \dot \lambda(t)
    =
    -c_\lambda \lambda(t)
    +a\int_0^t g(t-s)\lambda(s)\,\dd s,
    \qquad
    \lambda(0)=1.
    \label{eq:numerical-modal-equation}
\end{equation}
The modal branching ratio is $\branch=aG$, and the modal characteristic function is
\begin{equation}
    D(\xi)
    =
    \xi+c_\lambda-a\Laplace g(\xi)
    =
    \xi+c_\lambda-a(\xi+\theta)^{-\alpha}.
    \label{eq:numerical-characteristic-function}
\end{equation}
The modal criterion of \Cref{thm:modal} predicts decay when $\branch<c_\lambda$, critical behavior
when $\branch=c_\lambda$, and exponential growth when $\branch>c_\lambda$. In the supercritical
case the asymptotic growth rate is the positive root $\xi_*>0$ of $D(\xi_*)=0$. We use
$\branch\in\{0.65,1.00,1.25\}$ with $a=\branch/G$, time horizon $T=60$, step $\Delta t=0.02$, and
midpoint-quadrature time stepping
$\lambda_{m+1}=\lambda_m+\Delta t[-c_\lambda\lambda_m+a\sum_{k=0}^{m}w_k\lambda_{m-k}]$ with
$w_k=g((k+\tfrac12)\Delta t)\Delta t$. For $\branch=1.25$, solving $D(\xi_*)=0$ gives
$\xi_*\approx0.1001$, while a least-squares fit of $\log\abs{\lambda(t)}$ over the final part of
the trajectory gives $\widehat\xi\approx0.0970$ (\Cref{fig:numerical-validation}A), confirming that
the rightmost zero of \eqref{eq:numerical-characteristic-function} governs the unstable modal
growth.

\subsection{Network stability diagnostics}
\label{subsec:numerical-network-diagnostics}

We embed the modal criterion into a Watts--Strogatz small-world network \cite{watts1998collective} with $n=120$, ring degree
$k=4$, rewiring probability $p=0.1$, Laplacian $L$ with eigenvalues
$0=\ell_1\le\cdots\le\ell_n$.

\paragraph{Commuting excitation.}
Take $A_{\mathrm c}(s)=sA_0$ with $A_0=a_0\id+a_1L$, $a_0=0.10$, $a_1=0.05$, excitation scale
$s\ge0$. Since $A_0$ is a polynomial in $L$ it commutes with $L$, the modal gains are
$a_j(s)=s(a_0+a_1\ell_j)$, and the modal branching ratios are $\branch_j(s)=a_j(s)G$. The sharp
modal threshold $\max_j\branch_j(s)<c_\lambda$ is
\begin{equation}
    s<s_{\mathrm c}^{\mathrm{modal}}
    :=
    \frac{c_\lambda}{G\max_j(a_0+a_1\ell_j)} .
    \label{eq:numerical-commuting-threshold}
\end{equation}
Because $A_0$ is symmetric and diagonalized by the Laplacian eigenbasis, the norm-based condition
$\norm{A_{\mathrm c}(s)}G<c_\lambda$ coincides with \eqref{eq:numerical-commuting-threshold}: in
the commuting case the modal threshold is sharp and the norm condition has no slack
(\Cref{fig:numerical-validation}B), confirming \Cref{cor:global} and illustrating
\Cref{cor:topology}.

\paragraph{Nonnormal noncommuting excitation.}
We then choose a strongly nonnormal upper-triangular perturbation
$A_{\mathrm{nn}}(s)=s(a_0\id+qK)$ with $a_0=0.25$, $q=1.20$, where $K$ is the one-step upper-shift
matrix ($K_{i,i+1}=1$, else $0$). Then $A_{\mathrm{nn}}$ does not commute with $L$ and is highly
nonnormal: for $A_{\mathrm{nn}}(1)$,
\begin{equation}
    r(A_{\mathrm{nn}}(1))=0.25,
    \qquad
    \norm{A_{\mathrm{nn}}(1)}_2\approx 1.4499,
    \qquad
    \frac{\norm{A_{\mathrm{nn}}(1)}_2}{r(A_{\mathrm{nn}}(1))}\approx 5.80 .
    \label{eq:numerical-nonnormal-gap}
\end{equation}
The norm-based sufficient condition of \Cref{prop:noncommute} certifies stability only for
$s<s_{\mathrm c}^{\mathrm{norm}}:=c_\lambda/(\norm{A_{\mathrm{nn}}(1)}_2G)\approx0.4246$, whereas
the spectral criterion of \Cref{prop:spectral_block} gives
$s<s_{\mathrm c}^{\mathrm{spec}}:=c_\lambda/(r(A_{\mathrm{nn}}(1))G)\approx2.4623$. Thus the
certified stable range from the spectral criterion is almost six times larger than from the norm
bound (\Cref{fig:numerical-validation}B). This quantifies the price of norm-based estimates under
nonnormal excitation: the rigorous general noncommuting statement remains the norm-based sufficient
condition of \Cref{prop:noncommute}, while the spectral diagnostic of \Cref{prop:spectral_block}
(valid under $A_z\ge0$) shows how conservative that bound can be.

\subsection{Finite-range quenched amplification}
\label{subsec:numerical-finite-range-amplification}

Finally we validate the finite-range quenched amplification mechanism of
\Cref{sec:quenched}. Consider the Hurwitz but nonnormal matrix
\begin{equation}
    A_\HR=
    \begin{pmatrix}
        -0.10 & 3.00\\
        0     & -0.20
    \end{pmatrix},
    \qquad
    \abscissa(A_\HR)=-0.10<0,
    \label{eq:numerical-hurwitz-nonnormal}
\end{equation}
so the deterministic linear system is asymptotically stable, yet its Euclidean logarithmic norm is
$\lognorm(A_\HR)=\lambda_{\max}(\tfrac12(A_\HR+A_\HR^\top))\approx1.3508>0$: stable in spectral
abscissa but transiently amplifying in logarithmic norm. Direct computation gives
$\max_{t\ge0}\norm{e^{tA_\HR}}_2\approx7.52$ near $t\approx6.89$
(\Cref{fig:numerical-validation}C), a concrete realization of the operator geometry behind the
cone-alignment hypothesis \Cref{ass:cone}.

To isolate the residence mechanism without making the numerical test tautological, we do not
sample the analytical map used in the proof. Instead we simulate the switched ODE
\begin{equation}
    \dot X(t)=A_{Z(t)}X(t)+f,
    \qquad
    f=0.02\,v_\HR,
    \label{eq:numerical-switched-ode}
\end{equation}
where $v_\HR$ is the top eigenvector of $(A_\HR+A_\HR^\top)/2$. The trajectory starts at
$X(0)=v_\HR$, remains in the unfavorable regime for a residence time
$\tau\sim\mathrm{Exp}(\lambda_\HR)$ with $\lambda_\HR=1.2$, and then enters a safe regime
$A_{\Safe}=-I$. The burst is measured directly as
$B_T=\sup_{0\le t\le T}\|X(t)\|_2$ from the integrated ODE. Thus panel D uses the matrix
$A_\HR$, the residence process, and the switched dynamics; it is not generated from the closed-form
transformation $B=c_1\exp(\gamma\tau)-c_2$.

With $2\times10^5$ residence samples and $T_0=5$, the finite cap is
$b_{\max}\approx5.80$. This is below the peak gain $7.52$ of $e^{tA_\HR}$ because the finite-range
experiment stops at $T_0=5$, before the unconstrained transient peak near $t\approx6.89$. Fitting
the central finite-range window of the empirical survival curve in log--log coordinates gives
$\widehat\kappa\approx2.46$. As a matrix-based diagnostic, the finite-band growth rate estimated
from the integrated response $\tau\mapsto B_T(\tau)$ is $\gamma_{\mathrm{eff}}\approx0.437$, far
below the instantaneous upper scale $\lognorm(A_\HR)\approx1.3508$ because the trajectory rotates
away from $v_\HR$ after entry. This is exactly the situation described by
\Cref{lem:cone}: using the realized cone-alignment level
$a_{0.70}=0.956$ (the 70th percentile of the minimum alignment along residences, hence an
empirical cone event with $p_0\approx0.30$) gives
$\gamma_\HR^{\mathrm c}(a_{0.70})\approx0.428$ and
$\lambda_\HR/\gamma_\HR^{\mathrm c}(a_{0.70})\approx2.80$, close to the independently fitted
slope. Thus the numerical experiment supports the corrected statement: the logarithmic norm is
an operator upper anchor, while the finite-band exponent is governed by the realized
cone-corrected rate. Exponential regime residences composed with nonnormal finite-time
amplification generate a power-law-like survival band with a finite upper cutoff
(\Cref{fig:numerical-validation}D).

\begin{figure}[H]
    \centering
    \includegraphics[width=0.98\textwidth]{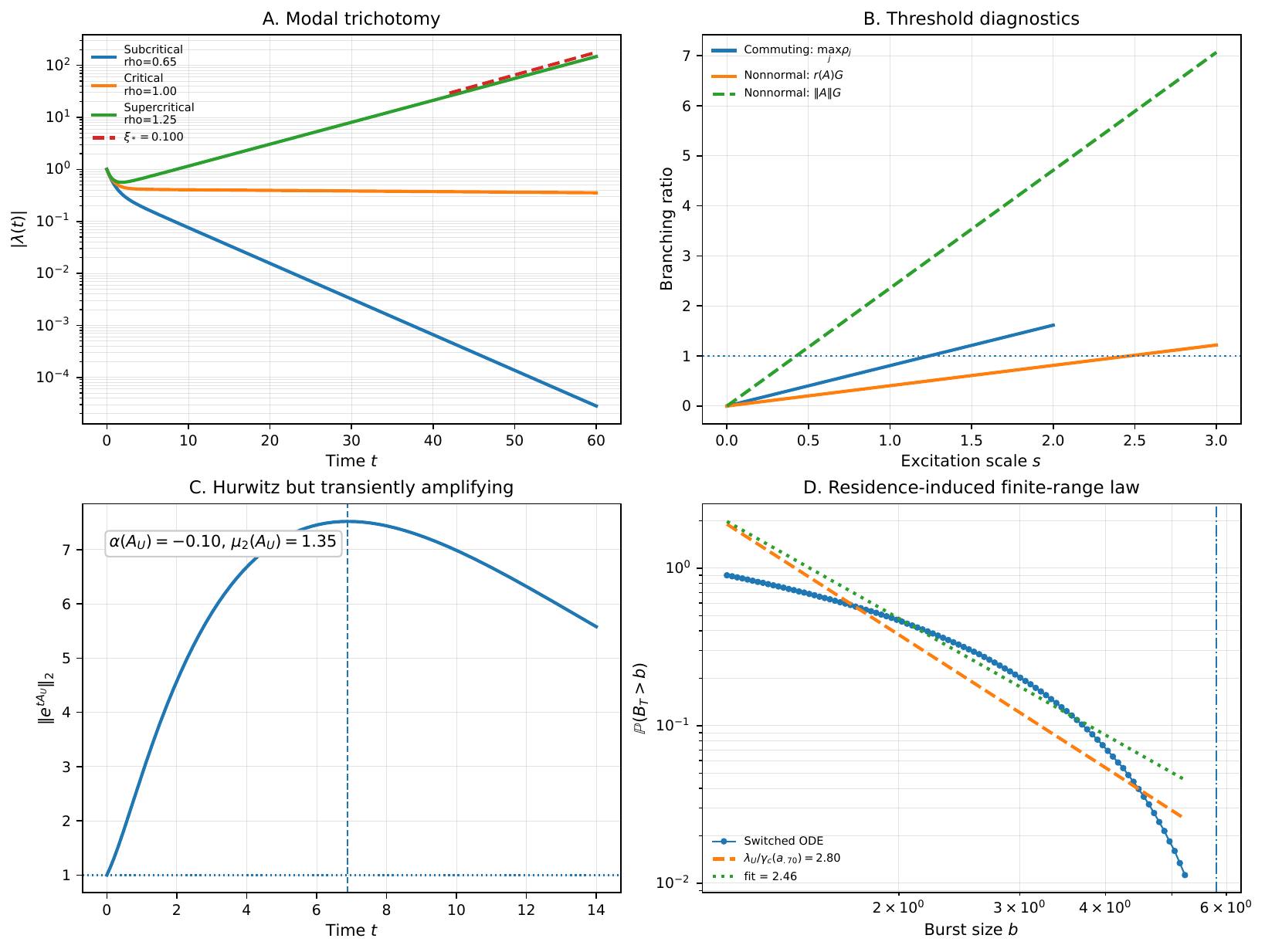}
    \caption{Numerical validation of the modal stability criterion and finite-range quenched
    amplification. \textbf{(A)} Scalar modal Volterra dynamics with the tempered fractional
    kernel \eqref{eq:numerical-tempered-kernel}: branching ratios $\branch=0.65,1.00,1.25$
    illustrate the subcritical, critical, and supercritical regimes of \Cref{thm:modal}; in the
    supercritical case the fitted growth rate agrees with the characteristic root $\xi_*$.
    \textbf{(B)} Stability diagnostics on a $120$-node Watts--Strogatz network: in the commuting
    case the exact modal criterion and the norm criterion coincide, while in the nonnormal
    noncommuting case the norm-based condition (\Cref{prop:noncommute}) is far more conservative
    than the spectral criterion (\Cref{prop:spectral_block}). \textbf{(C)} A Hurwitz but nonnormal
    operator has $\abscissa(A_\HR)<0$ while $\lognorm(A_\HR)>0$, giving transient amplification
    despite asymptotic stability. \textbf{(D)} Direct simulation of the switched ODE
    \eqref{eq:numerical-switched-ode}, with exponentially distributed residence times in the
    unfavorable regime and a dissipative safe regime after exit, produces a finite-range
    power-law-like survival band. The fitted slope is compared with the cone-corrected finite-band
    rate obtained from the realized alignment distribution and with a matrix-based diagnostic
    estimated from the integrated ODE response, not imposed through the closed-form residence-to-burst
    transformation.}
    \label{fig:numerical-validation}
\end{figure}

\begin{table}[H]
    \centering
    \caption{Main numerical parameters used in the validation experiments.}
    \label{tab:numerical-validation-parameters}
    \begin{tabular}{lll}
        \toprule
        Quantity & Value & Role \\
        \midrule
        $\alpha$ & $0.7$ & Tempered fractional memory exponent \\
        $\theta$ & $0.5$ & Tempering rate \\
        $G=\theta^{-\alpha}$ & $1.6245$ & Kernel mass \\
        $c_\lambda$ & $1$ & Intensity damping threshold \\
        $n$ & $120$ & Network size \\
        $k,p$ & $4,0.1$ & Watts--Strogatz degree and rewiring probability \\
        $\branch$ & $0.65,1.00,1.25$ & Subcritical, critical, supercritical modal tests \\
        $\lambda_\HR$ & $1.2$ & Unfavorable-regime exit rate \\
        $\gamma_{\mathrm{eff}}$ & $0.437$ & Finite-band growth rate estimated from switched ODE \\
        $a_{0.70}$ & $0.956$ & Realized cone-alignment level; empirical $p_0\approx0.30$ \\
        $\gamma_\HR^{\mathrm c}(a_{0.70})$ & $0.428$ & Cone-corrected growth rate from $A_\HR$ \\
        $\lambda_\HR/\gamma_\HR^{\mathrm c}(a_{0.70})$ & $2.80$ & Cone-corrected exponent diagnostic \\
        $\widehat\kappa$ & $2.46$ & Fitted survival exponent from switched ODE \\
        $T_0$ & $5$ & Maximum residence horizon for finite-range scaling \\
        \bottomrule
    \end{tabular}
\end{table}

Together these experiments validate the analytic mechanisms developed above. The scalar simulation
verifies that the rightmost zero of the modal characteristic function governs the trichotomy at
$\branch=c_\lambda$. The network experiment shows that the modal criterion is sharp in the
commuting case and that norm-based estimates may be severely conservative under nonnormal
noncommuting excitation, exactly the gap that \Cref{prop:spectral_block} closes. The residence
experiment confirms the finite-range mechanism non-circularly: the burst law is obtained by
integrating a switched ODE and only then fitting the survival band.

\subsection{Analytic trichotomy on a minimal network}

For completeness we record the analytic trichotomy on the smallest nontrivial network, the path
graph on three nodes with Laplacian eigenvalues $\ell_1=0,\ell_2=1,\ell_3=3$. With scalar
excitation $A_z=a_z\id$ (so \Cref{ass:commute} and \Cref{ass:gains} hold) and $c_\lambda=1$, every
mode has the same branching ratio $\branch_z=a_z\theta^{-\alpha}$, and \Cref{thm:modal} gives:
subcritical decay if $a_z\theta^{-\alpha}<1$; algebraic relaxation $t^{-(1-\beta)}$ if a heavy
integrable kernel of class \ref{C2} is used at criticality $a_zG=c_\lambda$ (for the tempered
kernel of class \ref{C1} the critical relaxation is to a constant, not algebraic---see
\Cref{thm:modal}(ii)); and exponential
growth $e^{\xi^\ast t}$ at the rightmost root $\xi^\ast>0$ of $\xi+1-a_z\Laplace g_z(\xi)=0$ if
$a_z\theta^{-\alpha}>1$. Because the excitation is scalar, this threshold is topology-independent
(\Cref{cor:topology}).

%% ============================================================================
\section{Discussion}
\label{sec:discussion}
%% ============================================================================

We have developed a rigorous theory for regime-dependent operator-valued Volterra equations with
completely monotone memory, establishing a fractional resolvent calculus, a sharp modal stability
criterion, a finite-range power-law amplification mechanism driven by operator non-normality, and a hydrodynamic
limit linking the deterministic intensity block to relaxing microscopic Hawkes dynamics. Three points situate the
results.

The modal criterion of \Cref{thm:modal} is sharp---necessary and sufficient under simultaneous
diagonalization---and isolates the network modes responsible for instability, recovering the
norm-based sufficient condition as a corollary and exhibiting topology-independent criticality
under scalar excitation. The finite-range amplification of \Cref{thm:quenched} is, to our
knowledge, a new mechanism: it is anchored by the logarithmic norm of a Hurwitz but non-normal
operator under regime residence, while the actual exponent uses the cone-corrected finite-band
rate realized by aligned trajectories rather than a large-deviation rate functional
(\Cref{rem:why_operator}); its amplification is capped by logarithmic-norm contraction
(\Cref{prop:truncation}). The hydrodynamic limit of
\Cref{thm:hydro} makes the branching-ratio terminology precise and ties macroscopic spectral
instability to microscopic explosion.

%\paragraph{What an applied modeler gains.}
The theory provides a \emph{computable} stability test for memory-coupled network dynamics: one
forms the regime operator $A_z$ and kernel mass $G_z$, computes the modal branching ratios
$\branch_{z,j}=a_{z,j}G_z$ (or the spectral radius $r(A_z)G_z$ in the nonnormal nonnegative case),
and compares with the damping threshold $c_\lambda$. The criterion is sharp under commutativity
and topology-independent under scalar excitation, so the bifurcation can be located without
simulating the full nonlocal dynamics, and the unstable modes are identified explicitly. The
numerical study of \Cref{sec:example} shows that this modal/spectral test can certify stability
over a parameter range several times larger than a norm-based bound.

%\paragraph{What the theory gains.}
The work unifies three classically separate objects into a single resolvent-family framework: the
completely monotone Volterra resolvent calculus, the Hawkes branching ratio and its hydrodynamic
limit, and non-normal transient growth measured by the logarithmic norm. The bridge is that modal
branching ratios govern deterministic stability and microscopic branching explosion, while the
logarithmic norm of a fixed lifted realization, rather than the spectral abscissa, provides the
upper operator scale for pathwise finite-range amplification, with the realized cone correction
determining the observed finite-band exponent. This places memory-driven heavy-burst phenomena and
spectral stability within one logarithmic-norm/resolvent account.

%\paragraph{limitations.}
The analysis is finite-dimensional ($\X=\R^n\times\R^n$); the infinite-dimensional node-space
extension is natural (the constants are dimension-free) but requires sectorial-generator
machinery and is left as future work. The finite-range amplification rests on the cone-alignment
hypothesis \Cref{ass:cone}, a positive-probability event whose sufficient conditions we give but
do not derive from first principles for a general system. The control statement
\Cref{prop:truncation} is an \emph{idealized}-feedback cap, not a robust control theorem. And the
hydrodynamic martingale estimate requires $g_z\in L^2_{\mathrm{loc}}$, with the strongly singular
fractional case ($\alpha\le\tfrac12$) handled only by regularization (\Cref{rem:singular}). Each
of these is a concrete and tractable direction for sharpening the theory.

Further natural directions include sharp criticality analysis in the noncommuting case beyond the
sufficient condition of \Cref{prop:noncommute}, and central-limit refinements of \Cref{thm:hydro}
connecting to rough-volatility scaling limits \cite{jaisson2016}.

%% ============================================================================
\section*{Acknowledgments}
The author thanks colleagues for critical readings of earlier versions.

%% ============================================================================


\begin{thebibliography}{99}

\bibitem{bazhlekova2001fractional} E.~Bazhlekova,
\emph{Fractional Evolution Equations in Banach Spaces}, Ph.D. thesis,
Eindhoven University of Technology, 2001.

\bibitem{bremaud1996} P.~Br\'emaud and L.~Massouli\'e,
Stability of nonlinear Hawkes processes,
\emph{Ann. Probab.}, 24 (1996), pp.~1563--1588.

\bibitem{chung1997spectral} F.R.K.~Chung,
\emph{Spectral Graph Theory}, CBMS Reg. Conf. Ser. Math. 92, AMS, 1997.

\bibitem{coppel1965stability} W.A.~Coppel,
\emph{Stability and Asymptotic Behavior of Differential Equations}, Heath, 1965.

\bibitem{daPrato1992factorization} G.~Da Prato, S.~Kwapie\'n, and J.~Zabczyk,
Regularity of solutions of linear stochastic equations in Hilbert spaces,
\emph{Stochastics}, 23 (1987), pp.~1--23.

\bibitem{daPratoZabczyk1992} G.~Da Prato and J.~Zabczyk,
\emph{Stochastic Equations in Infinite Dimensions}, Cambridge University Press, 1992.

\bibitem{daley2003} D.J.~Daley and D.~Vere-Jones,
\emph{An Introduction to the Theory of Point Processes, Vol. I}, 2nd ed., Springer, 2003.

\bibitem{ethier1986} S.N.~Ethier and T.G.~Kurtz,
\emph{Markov Processes: Characterization and Convergence}, Wiley, 1986.

\bibitem{goldie1991} C.M.~Goldie,
Implicit renewal theory and tails of solutions of random equations,
\emph{Ann. Appl. Probab.}, 1 (1991), pp.~126--166.

\bibitem{gripenberg1990} G.~Gripenberg, S.-O.~Londen, and O.~Staffans,
\emph{Volterra Integral and Functional Equations}, Cambridge University Press, 1990.

\bibitem{hawkes1971} A.G.~Hawkes,
Spectra of some self-exciting and mutually exciting point processes,
\emph{Biometrika}, 58 (1971), pp.~83--90.

\bibitem{hawkes1974} A.G.~Hawkes and D.~Oakes,
A cluster process representation of a self-exciting process,
\emph{J. Appl. Probab.}, 11 (1974), pp.~493--503.

\bibitem{jaisson2015} T.~Jaisson and M.~Rosenbaum,
Limit theorems for nearly unstable Hawkes processes,
\emph{Ann. Appl. Probab.}, 25 (2015), pp.~600--631.

\bibitem{jaisson2016} T.~Jaisson and M.~Rosenbaum,
Rough fractional diffusions as scaling limits of nearly unstable heavy tailed Hawkes processes,
\emph{Ann. Appl. Probab.}, 26 (2016), pp.~2860--2882.

\bibitem{kesten1973} H.~Kesten,
Random difference equations and renewal theory for products of random matrices,
\emph{Acta Math.}, 131 (1973), pp.~207--248.

\bibitem{mao2006} X.~Mao and C.~Yuan,
\emph{Stochastic Differential Equations with Markovian Switching}, Imperial College Press, 2006.

\bibitem{pruss1993evolution} J.~Pr\"uss,
\emph{Evolutionary Integral Equations and Applications}, Monogr. Math. 87, Birkh\"auser, 1993.

\bibitem{schilling2012bernstein} R.L.~Schilling, R.~Song, and Z.~Vondra\v{c}ek,
\emph{Bernstein Functions: Theory and Applications}, 2nd ed., De Gruyter, 2012.

\bibitem{soderlind2006logarithmic} G.~S\"oderlind,
The logarithmic norm. History and modern theory,
\emph{BIT Numer. Math.}, 46 (2006), pp.~631--652.

\bibitem{trefethen2005spectra} L.N.~Trefethen and M.~Embree,
\emph{Spectra and Pseudospectra: The Behavior of Nonnormal Matrices and Operators},
Princeton University Press, 2005.

\bibitem{walsh1986} J.B.~Walsh,
An introduction to stochastic partial differential equations,
in \emph{\'Ecole d'\'Et\'e de Probabilit\'es de Saint-Flour XIV--1984}, Lecture Notes in Math. 1180,
Springer, 1986, pp.~265--439.

\bibitem{watts1998collective} D.J.~Watts and S.H.~Strogatz,
Collective dynamics of small-world networks,
\emph{Nature}, 393 (1998), pp.~440--442.

\end{thebibliography}
\end{document}